\spnewtheorem*{remark*}{Remark}{\bfseries}{\rmfamily}
\spnewtheorem*{claim*}{Claim}{\bfseries}{\rmfamily} 
\spnewtheorem*{claimproof}{Proof}{\bfseries}{\rmfamily}
\spnewtheorem*{fact}{Fact}{\it}{\rmfamily}
\newcommand\qedhere{\tag*{\qed}}
\newcommand{\Lh}{{\mathbf{\Gamma}}} 
\newcommand{\Lo}{\mathbf{L}} 
\newcommand{\mm}{^{-}} 
\newcommand{\cc}{^\circ} 
\newcommand{\domM}{\ensuremath{\mathrm{\dom(-)}}}  
\newcommand{\domP}{\ensuremath{\mathrm{\dom(+)}}}  
\newcommand{\dom}{\ensuremath{\mathrm{dom}}} 
\newcommand{\supp}{\ensuremath{\mathrm{supp}}} 
\newcommand{\upset}{\ensuremath{\mathord{\uparrow}\mkern1mu}} 
\newcommand{\downset}{\ensuremath{\mathord{\downarrow}\mkern1mu}} 
\newcommand{\mee}{\wedge}
\newcommand\ARG{\mkern1.5mu\text{-}\mkern1.5mu}
\newcommand{\ev}{\mathrm{ev}}
\newcommand{\inv}{^{-1}} %
\newcommand\sue{\subseteq} %
\newcommand{\N}{\ensuremath{\mathbb{N}}}
\newcommand{\Z}{\ensuremath{\mathbb{Z}}}
\newcommand{\Q}{\ensuremath{\mathbb{Q}}}
\newcommand{\ui}{[0,1]}
\newcommand{\w}{\widehat}
\newcommand{\mip}{-} 
\newcommand{\miss}{\ensuremath{\sim}} 
\newcommand\ic{\iota}
\newcommand\MBU[2]{\ensuremath{\sem[#1 \geq #2]}} 
\newcommand\MBD[2]{\ensuremath{\sem[#1 < #2]}}    
\newcommand\TOCITE[1][]{[{\color{blue}??}]}
\newcommand\NOdM{Ne\v{s}et\v{r}il and Ossona de Mendez}
\newcommand\ee[1]{\enspace #1 \enspace}
\newcommand\ete[1]{{\enspace\text{#1}\enspace}}
\newcommand\qtq[1]{{\quad\text{#1}\quad}}
\newcommand\FO{\mathrm{FO}}
\newcommand\Fin{\mathrm{Fin}}
\newcommand\Fs{{\mathcal{F}}}
\newcommand\SP[1]{\left< #1 \right>_{\mathrm{\Gamma}}}
\newcommand\SPc[1]{\left< #1 \right>_{\mathrm{I}}}
\newcommand\PrG[1]{\mathbb{P}_{\geq #1}\,}
\newcommand\PrL[1]{\mathbb{P}_{< #1}\,}
\newcommand\true{\mathbf{t}}
\newcommand\false{\mathbf{f}}
\newcommand{\sem}[1][{\ARG}]{\ensuremath{\llbracket #1 \rrbracket}}
\newcommand{\Pf}{F} 
\newcommand\Formn{\mathrm{Fm}_n}
\newcommand\Modn{\mathrm{Mod}_n}
\DeclareMathOperator{\Typ}{Typ}
\renewcommand{\P}{\raisebox{.17\baselineskip}{\Large\ensuremath{\wp}}} 
\newcommand{\op}{\partial} 
\DeclareMathOperator{\Mea}{\mathcal{M}_{\Gamma}}
\DeclareMathOperator{\Meac}{\mathcal{M}_I}
\newcommand\Shat{\w{\mathbf S}}
\DeclareMathOperator{\B}{\mathcal{B}} 
\DeclareMathOperator{\PFG}{\mathbf P} 
\renewcommand{\restriction}{\mathord{\upharpoonright}}
\newcommand{\blank}{(\ARG)}
\newcommand{\LG}{\gamma^{\#}}    
\newcommand{\LC}{\ic^{\#}}    
\newcommand\PL[1]{P\mkern-1mu\mathcal{L}_{#1}}
\newcommand{\mono}{\hookrightarrow} 
\newcommand{\epi}{\twoheadrightarrow} 
\renewcommand{\epsilon}{\varepsilon}
\renewcommand{\theta}{\vartheta}
\renewcommand{\phi}{\varphi}
\begin{document}

\setlength{\abovedisplayskip}{6pt}
\setlength{\belowdisplayskip}{6pt}
\setlength{\belowcaptionskip}{-20pt}

\title{A duality theoretic view on limits of finite structures\thanks{This project has been supported by the European Research Council (ERC) under the European Union's Horizon 2020 research and innovation program (grant agreement No.670624). Luca Reggio has received an individual support under the grants GA17-04630S of the Czech Science Foundation, and  No.184693 of the Swiss National Science Foundation.}}

\author{Mai Gehrke\inst{1} \and Tom\'{a}\v{s} Jakl\inst{1} \and Luca Reggio\inst{2}}
\authorrunning{M. Gehrke et al.}
\institute{CNRS and Universit\'e C{\^o}te d'Azur, France \\
\email{\{mgehrke,tomas.jakl\}@unice.fr}
\and
Institute of Computer Science of the Czech Academy of Sciences, Czech Republic and 
Mathematical Institute, University of Bern, Switzerland\\
\email{luca.reggio@math.unibe.ch}}

\maketitle              

\begin{abstract}
    A systematic theory of \emph{structural limits} for finite models has been developed by Ne{\v s}et{\v r}il and Ossona de Mendez. It is based on the insight that the collection of finite structures can be embedded, via a map they call the \emph{Stone pairing}, in a space of measures, where the desired limits can be computed. We show that a closely related but finer grained space of measures arises --- via Stone-Priestley duality and the notion of types from model theory --- by enriching the expressive power of first-order logic with certain ``probabilistic operators''.  We provide a sound and complete calculus for this extended logic and expose the functorial nature of this construction.
    
The consequences are two-fold. On the one hand, we identify the logical gist of the theory of structural limits. On the other hand, our construction shows that the duality-theoretic variant of the Stone pairing captures the adding of a layer of quantifiers, thus making a strong link to recent work on semiring quantifiers in logic on words. In the process, we identify the model theoretic notion of \emph{types} as the unifying concept behind this link. These results contribute to bridging the strands of logic in computer science which focus on semantics and on more algorithmic and complexity related areas, respectively.

\keywords{Stone duality \and finitely additive measures \and structural limits \and finite model theory \and  formal languages \and logic on words }
\end{abstract}

\section{Introduction}

While topology plays an important role, via Stone duality, in many parts of semantics, topological methods in more algorithmic and complexity oriented areas of theoretical computer science are not so common. One of the few examples,
 the one we want to consider here, is the study of limits of finite relational structures. 
We will focus on the \emph{structural limits} introduced by Ne{\v s}et{\v r}il and Ossona de Mendez~\cite{nevsetril2012model,NO2017}. These provide a common generalisation of various notions of limits of finite structures studied in probability theory, random graphs, structural graph theory, and finite model theory. 
The basic construction in this work is the so-called \emph{Stone pairing}. 
Given a relational signature $\sigma$ and a first-order formula $\phi$ in the signature $\sigma$ with free variables $v_1, \dots, v_n$, define
\begin{equation}\label{eq:Stone-Pairing}
\left<\phi, A\right> = \frac{|\{ \overline a \in A^n \mid A \models \phi(\overline a)\}|}{|A|^n} \qquad \parbox{13em}{\centering\emph{(the probability that a random assignment in $A$ satisfies $\phi$)}.} 
\end{equation}
Ne{\v s}et{\v r}il and Ossona de Mendez view the map $A \mapsto \left<\ARG, A\right>$ as an embedding of the finite $\sigma$-structures into the space of probability measures over the Stone space dual to the Lindenbaum-Tarski algebra of all first-order formulas in the signature $\sigma$. This space is complete and thus provides the desired limit objects for all sequences of finite structures which embed as Cauchy sequences.

Another example of topological methods in an algorithmically oriented area of computer science is the use of profinite monoids in automata theory. In this setting, profinite monoids are the subject of the extensive theory, based on theorems by Eilenberg and Reiterman, and used, among others, to settle decidability questions~\cite{Pin09}.
In~\cite{GGP2008}, it was shown that this theory may be understood as an application of Stone duality, thus making a bridge between semantics and more algorithmically oriented work. 
Bridging this semantics-versus-algorithmics gap in theoretical computer science has since gained quite some momentum, notably with the recent strand of research by Abramsky, Dawar and co-workers~\cite{Abramsky2017b,AbramskyShah2018}. 
In this spirit, a natural question is whether the structural limits of \NOdM{} also can be understood semantically, and in particular whether the topological component may be seen as an application of Stone duality. 

 More precisely, recent work on understanding quantifiers in the setting of languages over finite words~\cite{GPR2017} has shown that adding a layer of certain quantifiers (such as classical and modular quantifiers) corresponds dually to measure space constructions. The measures involved are not classical but only finitely additive and they take values in finite semirings rather than in the unit interval. Nevertheless, this appearance of \emph{measures as duals of quantifiers} begs the further question whether the measure spaces in the theory of structural limits may be obtained via Stone duality from a semantic addition of certain quantifiers to classical first-order logic.

The purpose of this paper is to address this question. Our main result is that the Stone pairing of \NOdM{} 
is related by a retraction to a Stone space of measures, which is dual to the Lindenbaum-Tarski algebra of a logic fragment obtained from first-order logic by adding one layer of a probabilistic quantifier, and which arises in exactly the same way as the spaces of semiring-valued measures in logic on words. That is, the Stone pairing, although originating from other considerations, may be seen as arising by duality from a semantic construction.

\smallskip

A foreseeable hurdle is that spaces of classical measures are valued in the unit interval $\ui{}$ which is not zero-dimensional and hence outside the scope of Stone duality. This is well-known to cause problems e.g.\ in attempts to combine non-determinism and probability  in domain theory~\cite{Jung13}. However, in the structural limits of \NOdM, at the base, one only needs to talk about finite models equipped with normal distributions and thus only the finite intervals $I_n = \{ 0, \frac{1}{n}, \frac{2}{n}, \dots, 1\}$ are involved. A careful duality-theoretic analysis identifies a codirected diagram (i.e.\ an inverse limit system) based on these intervals compatible with the Stone pairing. The resulting inverse limit, which we denote $\Lh$, is a Priestley space. It comes equipped with an algebra-like structure, which allows us to reformulate many aspects of the theory of structural limits in terms of $\Lh$-valued measures as opposed to $\ui$-valued measures.

The analysis justifying the structure of $\Lh$ is  based on duality theory for double quasi-operator algebras~\cite{GP2007,GP2007b}. In the presentation, we have tried to compromise between giving interesting topo-relational insights into why $\Lh$ is as it is, and not overburdening the reader with technical details.
Some interesting features of $\Lh$, dictated by the nature of the Stone pairing and the ensuing codirected diagram, are that
\begin{itemize}
\item $\Lh$ is based on a version of $\ui$ in which the rationals are doubled;
\item $\Lh$ comes with section-retraction maps $\begin{tikzcd}[cramped,sep=1.8em] \ui \rar[hook]{\ic} & \Lh \rar[->>]{\gamma} & \ui\end{tikzcd}$;
\item the map $\ic$ is lower semicontinuous while the map $\gamma$ is continuous.
\end{itemize}
These features are a consequence of general theory and precisely allow us to witness continuous phenomena relative to $\ui$ in the setting of $\Lh$. 

\subsection*{Our contribution}

We show that the ambient measure space for the structural limits of \NOdM{} can be obtained via \emph{``adding a layer of quantifiers''} in a suitable enrichment of first-order logic. The conceptual framework for seeing this is that of \emph{types} from classical model theory.
 More precisely, we will see that a variant of the Stone pairing is a map into a space of measures with values in a Priestley space $\Lh$. Further, we show that this map is in fact the embedding of the finite structures into the space of ($0$-)types of an extension of first-order logic, which we axiomatise.
On the other hand, $\Lh$-valued measures and $\ui$-valued measures are tightly related by a retraction-section pair which allows the transfer of properties.
These results identify the logical gist of the theory of structural limits and provide a new interesting connection between logic on words and the theory of structural limits in finite model theory.

\paragraph*{Outline of the paper.}
In section \ref{s:prel} we briefly recall Stone-Priestley duality, its application in logic via spaces of types, and the particular instance of logic on words (needed only to show the similarity of the constructions). In Section \ref{s:Lh} we introduce the Priestley space $\Lh$ with its additional operations, and show that it admits $\ui$ as a retract. The spaces of $\Lh$-valued measures are introduced in Section \ref{s:spaces-of-measures}, and the retraction of $\Lh$ onto $\ui$ is lifted to the appropriate spaces of measures. In Section \ref{s:stone-pairing} we introduce the $\Lh$-valued Stone pairing and make the link with logic on words. Further, we compare convergence in the space of $\Lh$-valued measures with the one considered by Ne{\v s}et{\v r}il and Ossona de Mendez. Finally, in Section \ref{s:logic-of-measures} we show that constructing the space of $\Lh$-valued measures dually corresponds to enriching the logic with probabilistic operators.
\section{Preliminaries}\label{s:prel}
\paragraph*{Notation.} Throughout this paper, if $X\xrightarrow{f} Y\xrightarrow{g} Z$ are functions, their composition is denoted $g\cdot f$. For a subset $S\subseteq X$, $f_{\restriction S}\colon S\to Y$ is the obvious restriction. Given any set $T$, $\P(T)$ denotes its power-set. Further, for a poset $P$, $P^\op$ is the poset obtained by turning the order of $P$ upside down.
\subsection{Stone-Priestley duality}\label{s:duality}
In this paper, we will need Stone duality for bounded distributive lattices in the order topological form due to Priestley \cite{Priestley1970}. It is a powerful and well established tool in the study of propositional logic and semantics of programming languages, see e.g. \cite{Goldblatt1989,Abramsky91} for major landmarks. We briefly recall how this duality works.

A \emph{compact ordered space} is a pair $(X,\leq)$ where $X$ is a compact space and $\leq$ is a partial order on $X$ which is closed in the product topology of $X\times X$. (Note that such a space is automatically Hausdorff). A compact ordered space is a \emph{Priestley space} provided it is \emph{totally order-disconnected}. That is, for all $x,y\in X$ such that $x\not\leq y$, there is a \emph{clopen} (i.e.\ simultaneously closed and open) $C\subseteq X$ which is an up-set for $\leq$, and satisfies $x\in C$ but $y\notin C$. We recall the construction of the Priestley space of a distributive lattice $D$.\footnote{We assume all distributive lattices are bounded, with the bottom and top denoted by $0$ and $1$, respectively. The bounds need to be preserved by homomorphisms.}

 A non-empty proper subset $F\subset D$ is a \emph{prime filter} if it is \emph{(i)} upward closed (in the natural order of $D$), 
 \emph{(ii)} closed under finite meets, and \emph{(iii)} if $a\vee b\in F$, either $a\in F$ or $b\in F$. Denote by $X_D$ the set of all prime filters of $D$. By Stone's Prime Filter Theorem, the map
\begin{align*}
\sem\colon D\to \P(X_D), \ \ a\mapsto \sem[a]=\{F\in X_D\mid a\in F\}
\end{align*}
is an embedding.
Priestley's insight was that $D$ can be recovered from $X_D$, if the latter is equipped with the inclusion order and the topology generated by the sets of the form $\sem[a]$ and their complements. This makes $X_D$ into a Priestley space --- the \emph{dual space} of $D$ --- and the map $\sem$ is an isomorphism between $D$ and the lattice of clopen up-sets of $X_D$. Conversely, any Priestley space $X$ is the dual space of the lattice of its clopen up-sets. We call the latter the \emph{dual lattice} of $X$. This correspondence extends to morphisms. In fact, Priestley duality states that the category of distributive lattices with homomorphisms is dually equivalent to the category of Priestley spaces and continuous monotone maps.
When restricting to Boolean algebras, we recover the celebrated Stone duality restricted to Boolean algebras and \emph{Boolean spaces}, i.e.\ compact Hausdorff spaces in which the clopen subsets form a basis. 

\subsection{Stone duality and logic: type spaces}
The \emph{theory of types} is an important tool for first-order logic. We briefly recall the concept as it is closely related to, and provides the link between, two otherwise unrelated occurrences of topological methods in theoretical computer science.

Consider a signature $\sigma$ and a first-order theory $T$ in this signature. For each $n\in\N$, let $\Formn$ denote the set of first-order formulas whose free variables are among  $\overline{v}=\{v_1,\dots,v_n\}$, and let $\Modn(T)$ denote the class of all pairs $(A,\alpha)$ where $A$ is a model of $T$ and $\alpha$ is an interpretation of $\overline{v}$ in $A$. Then the satisfaction relation, $(A,\alpha)\models \varphi$, is a binary relation from $\Modn$ to $\Formn$. It induces the equivalence relations of elementary equivalence $\equiv$ and logical equivalence $\approx$ on these sets, respectively. The quotient $\FO_n(T)=\Formn/{\approx}$ carries a natural Boolean algebra structure and is known as the \emph{$n$-th Lindenbaum-Tarski algebra} of $T$. Its dual space is $\Typ_n(T)$, the \emph{space of $n$-types} of $T$, whose points can be identified with elements of $\Modn(T)/{\equiv}$. The Boolean algebra $\FO(T)$ of \emph{all} first-order formulas modulo logical equivalence over $T$ is the directed colimit of the $\FO_n(T)$ for $n\in\N$ while its dual space, $\Typ(T)$, is the codirected limit of the $\Typ_n(T)$ for $n\in\N$ and consists of models equipped with interpretations of the full set of~variables.

If we want to study finite models, there are two equivalent approaches: e.g.\ at the level of sentences, we can either consider the theory $T_\text{\em fin}$ of finite $T$-models, or the closure of the collection of all finite $T$-models in the space $\Typ_0(T)$. This closure yields a space, which should tell us about finite $T$-structures. Indeed, it is equal to $\Typ_0(T_\text{\em fin})$, the space of pseudofinite $T$-structures. For an application of this, see~\cite{vanGoolSteinberg}.
Below, we will see an application in finite model theory of the case $T=\emptyset$ (in this case we write $\FO(\sigma)$ and $\Typ(\sigma)$ instead of $\FO(\emptyset)$ and $\Typ(\emptyset)$ to at least flag the signature).

\smallskip
In light of the theory of types as exposed above, the Stone pairing of \NOdM{} (see equation~\eqref{eq:Stone-Pairing}) can be regarded as an embedding of finite structures into the space of probability measures on $\Typ(\sigma)$, which set-theoretically are finitely additive functions $\FO(\sigma) \to \ui$.

\subsection{Duality and logic on words}\label{s:duality-and-low}
As mentioned in the introduction, spaces of measures arise via duality in \emph{logic on words}~\cite{GPR2017}.  Logic on words, as introduced by B\"uchi, see e.g.~\cite{MaSch08} for a recent survey, is a variation and specialisation of finite model theory where only models based on words are considered. I.e., a word $w\in A^*$ is seen as a relational structure on $\{1,\ldots, |w|\}$, where $|w|$ is the length of $w$, equipped with a unary relation $P_a$, for each $a\in A$, singling out the positions in the word where the letter $a$ appears. Each sentence $\phi$ in a language interpretable over these structures yields a language $L_{\phi}\subseteq A^*$ consisting of the words satisfying $\phi$. Thus, logic fragments are considered modulo the theory of finite words and the Lindenbaum-Tarski algebras are subalgebras of $\P(A^*)$ consisting of the appropriate $L_{\phi}$'s, cf.~\cite{vanGoolSteinberg} for a treatment of first-order logic on words. 
 
For lack of logical completeness, the duals of the Lindenbaum-Tarski algebras have more points than those given by models. Nevertheless, the dual spaces of types, which act as compactifications and completions of the collections of models, provide a powerful tool for studying logic fragments by topological means. The central notion is that of \emph{recognition}, in which, a Boolean subalgebra $\B\subseteq\P(A^*)$ is studied by means of the dual map $\eta\colon\beta(A^*)\to X_{\B}$. Here $\beta(A^*)$ is the Stone dual of $\P(A^*)$, also known in topology as the {\v C}ech-Stone compactification of the discrete space $A^*$, and $X_{\B}$ is the Stone dual of $\B$. The set $A^*$ embeds in $\beta(A^*)$, and $\eta$ is uniquely determined by its restriction $\eta_0\colon A^*\to X_{\B}$. Now, Stone duality implies that $L\subseteq A^*$ is in $\B$ iff there is a clopen subset $V\subseteq X_{\B}$ so that 
$\eta_0^{-1}(V)=L$. Anytime the latter is true for a map $\eta$ and a language $L$ as above, one says that \emph{$\eta$ recognises $L$}.\footnote{Here, being beyond the scope of this paper, we are ignoring the important role of the monoid structure available on the spaces (in the form of profinite monoids or BiMs, cf.\ \cite{vanGoolSteinberg,GPR2017}).}

When studying logic fragments via recognition, the following inductive step is central: given a notion of quantifier and a recogniser for a Boolean algebra of formulas with a free variable, construct a recogniser for the Boolean algebra generated by the formulas obtained by applying the quantifier. 
 This problem was solved in \cite{GPR2017}, using duality theory, in a general setting of \emph{semiring quantifiers}. The latter are defined as follows: let $(S,+,\cdot,0_S,1_S)$ be a semiring, and $k\in S$. Given a formula $\psi(v)$, the formula $\exists_{S,k}v.\psi(v)$ is true of a word $w\in A^*$ iff $k = 1_S+\cdots+ 1_S$, $m$ times,
where $m$ is the number of assignments of the variable $v$ in $w$ satisfying $\psi(v)$. 
If $S=\Z/q\Z$, we obtain the so-called \emph{modular quantifiers}, and for $S$ the two-element lattice we recover the existential quantifier $\exists$. 

To deal with formulas with a free variable, one considers maps of the form $f\colon \beta((A\times 2)^*)\to X$ (the extra bit in $A\times 2$ is used to mark the interpretation of the free variable). In \cite{GPR2017} (see also \cite{GPR2019}), it was shown that $L_{\psi(v)}$ is recognised by $f$ iff for every $k\in S$ the language $L_{\exists_{S,k}v.\psi(v)}$ is recognised by the composite
\begin{equation} \label{def:R}
\xi\colon A^* \xrightarrow{\mathmakebox[3em]{R}} \Shat(\beta((A\times 2)^*)) \xrightarrow{\mathmakebox[3em]{\Shat(f)}} \Shat(X),
\end{equation}
 where $\Shat(X)$ is the space of finitely additive $S$-valued measures on $X$
and $R$ maps $w\in A^*$ to the measure $\mu_w\colon \P((A\times 2)^*) \to S$ sending a set $K\subseteq (A\times 2)^*$ to the sum $1_S + \cdots + 1_S$, $n_{w,K}$ times. Here, $n_{w,K}$ is the number of interpretations $\alpha$ of the free variable $v$ in $w$ such that the pair $(w,\alpha)$, seen as an element of $(A\times 2)^*$, belongs to $K$. Finally, $\Shat(f)$ sends a measure to its pushforward along $f$.

\section{The space $\Lh$}\label{s:Lh}
Central to our results is a Priestley space $\Lh$ closely related to $\ui$, in which our measures will take values. 
Its construction comes from the insight that the range of the Stone pairing $\left<\ARG, A\right>$, for a finite structure $A$ and formulas restricted to a fixed number of free variables, can be confined to a chain $I_n=\{ 0, \frac{1}{n}, \frac{2}{n}, \dots, 1\}$. Moreover, the floor functions $f_{mn,n}\colon I_{mn}\twoheadrightarrow I_n$ are monotone surjections. The ensuing system $\{f_{mn,n}\colon I_{mn}\twoheadrightarrow I_n\mid m,n\in \N\}$ can thus be seen as a codirected diagram of finite discrete posets and monotone maps. Let us define $\Lh$ to be the limit of this diagram. Then, $\Lh$ is naturally equipped with a structure of Priestley space, see e.g.\ \cite[Corollary VI.3.3]{Johnstone1986}, and can be represented as based on the set 
\[\{ r\mm \mid r\in (0,1]\} \cup \{ q\cc \mid q\in \Q\cap\ui \}.
\] 
The order of $\Lh$ is the unique total order which has $0\cc$ as bottom element, satisfies $r^*< s^*$ if and only if $r<s$ for ${\scriptstyle \ast}\in\{{\scriptstyle-},{\scriptstyle\circ}\}$, and such that $q\cc$ is a cover of $q\mm$ for every rational $q\in(0,1]$ (i.e.\ $q\mm<q\cc$, and there is no element strictly in between). In a sense, the values $q\mm$ represent approximations of the values of the form $q\cc$. Cf.~Figure~\ref{f:Lo-Lh}.
The topology of $\Lh$ is generated by the sets of the form
\[ \upset p\cc = \{x\in\Lh \mid p\cc\leq x\}
   \qtq{and}
   \downset q\mm=\{x\in\Lh \mid x\leq q\mm\}
\]
for $p,q\in \Q\cap\ui$ such that $q\neq 0$.
The distributive lattice dual to $\Lh$, denoted by $\Lo$, is given by
\begin{equation*}
    \Lo=  \{ \bot\} \cup (\Q\cap\ui)^\op, \text{ with } \bot <_\Lo q \text{ and } q \leq_\Lo p \text{ for every } p \leq q \text{ in } \Q\cap\ui.
\end{equation*}
\vspace{-2.0em}
\begin{figure}[htb]
\centering
\begin{tikzpicture}
    \begin{scope}
        \node at (0,0) (bot) {};
        \node at (0,0.15) (1) {};
        \node at (0,2) (0) {};

        \draw[densely dotted] (1.center) -- (0.center);

        \node at ($(bot)+(-0.35,0)$) {$\bot$};
        \node at ($(1)+(0.3,0.0)$) {1};
        \node at ($(0)+(0.3,0)$) {0};

        \foreach \pt in {1,0,bot} {
            \draw ($(\pt)-(0.1,0)$) -- ($(\pt)+(0.1,0)$);
        }

        \node at (-1, 1) {$\Lo =$};

        \draw [<->,
        line join=round,
        decorate,
        decoration={
            zigzag,
            segment length=5,
            amplitude=1,
            post=lineto,
            post length=4pt,
            pre length=4pt
        }]  (-2,1) to (-4,1);
    \end{scope}

    \begin{scope}[xshift=-17em]
        \node at (0,2.15) (1cc) {};
        \node at (0,2) (1mm) {};
        \node at (0,0) (0cc) {};
        \node at (0,1.4) (r) {};
        \node at (0,0.95) (qc) {};
        \node at (0,0.80) (qm) {};
        \node at ($(r) -(0.4,0.0)$) (rmm) {$r\mm$};
        \node at ($(qc)+(0.4,0.05)$) (qcc) {$q\cc$};
        \node at ($(qm)-(0.4,0.05)$) (qmm) {$q\mm$};
        \node at ($(1cc)+(0.4,0)$) {$1\cc$};
        \node at ($(1mm)-(0.4,0.05)$) {$1\mm$};
        \node at ($(0cc)+(0.4,0)$) {$0\cc$};

        \draw[densely dotted] (1mm.center) -- (qc.center);
        \draw[densely dotted] (qm.center) -- (0cc.center);

        \foreach \pt in {1cc,1mm,0cc,r,qm,qc} {
            \draw ($(\pt)-(0.1,0)$) -- ($(\pt)+(0.1,0)$);
        }
        \node at (-1.3, 1) {$\Lh =$};
    \end{scope}
\end{tikzpicture}
\caption{The Priestley space $\Lh$ and its dual lattice $\Lo$}
\label{f:Lo-Lh}
\end{figure}

\subsection{The algebraic structure on $\Lh$}\label{subs:mip-and-miss}
When defining measures we need an algebraic structure available on the space of values. The space $\Lh$ fulfils this requirement as it comes equipped with a partial operation $\mip\colon \domM\to \Lh$, where $\domM = \{(x,y)\in \Lh\times \Lh\mid y\leq x\}$ and
\[
    \begin{aligned}
        r\cc \mip s\cc &\ee= (r - s)\cc \\
        r\mm \mip s\cc &\ee= (r-s)\mm
    \end{aligned}
    \qquad\qquad
    \begin{rcases*}
        r\cc \mip s\mm \\
        r\mm \mip s\mm
    \end{rcases*}
    \ee=
    \begin{cases}
        (r - s)\cc & \text{ if } r-s \in \Q \\
        (r - s)\mm & \text{ otherwise}.
    \end{cases}
\]

In fact, this (partial) operation is dual to the truncated addition on the lattice $\Lo$. However, explaining this would require us to delve into extended Priestley duality for lattices with operations, which is beyond the scope of this paper. See \cite{Goldblatt1989} and also \cite{GP2007,GP2007b} for details.
It also follows from the general theory that there exists another partial operation definable from $\mip$, namely:
\begin{equation}\label{eq:def-of-miss}
\miss\colon\domM\to \Lh, \ \ x \miss y = \bigvee{\{ x \mip q\cc \mid y < q\cc \leq x\}}.
\end{equation}
Now we collect some basic properties of $\mip$ and $\miss$, needed in Section~\ref{s:spaces-of-measures}, which follow from the general theory of  \cite{GP2007,GP2007b}.
First, recall that a map into an ordered topological space is \emph{lower} (resp.\ \emph{upper}) \emph{semicontinuous} provided the preimage of any open down-set (resp.\ open up-set) is open.
\begin{lemma}\label{l:properties-of-mip}\label{l:properties-of-mis}
If $\domM$ is seen as a subspace of $\Lh \times \Lh{}^\op$, the following hold:
    \begin{enumerate}
        \item $\domM$ is a closed up-set in $\Lh \times \Lh{}^\op$;
        \item both $\mip\colon \domM\to \Lh$ and $\miss\colon \domM\to \Lh$ are monotone in the first coordinate, and antitone in the second;
        \item $\mip\colon \domM\to \Lh$ is lower semicontinuous;
        \item $\miss\colon \domM\to \Lh$ is upper semicontinuous.
    \end{enumerate}
\end{lemma}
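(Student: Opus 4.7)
The plan is to unwind the definitions directly, exploiting the explicit case-by-case formula for $\mip$, the definition of $\miss$ as a join of $\mip$-values, and the natural bases of open up- and down-sets of $\Lh$ given by the families $\{\upset p\cc\}$ and $\{\downset q\mm\}$ for rational $p,q$. For (i), the order of the Priestley space $\Lh$ is closed in $\Lh\times\Lh$, whose underlying topology coincides with that of $\Lh\times\Lh^\op$, so $\domM$ is closed; and it is an up-set since $(x,y)\leq(x',y')$ in $\Lh\times\Lh^\op$ means $x\leq x'$ and $y'\leq y$, so $y\leq x$ yields $y'\leq y\leq x\leq x'$. For (ii), monotonicity and antitonicity of $\mip$ are checked by inspection of its four defining cases, and the corresponding properties of $\miss$ follow formally, since the indexing set $\{q\cc\mid y<q\cc\leq x\}$ in the definition of $\miss$ grows when $x$ increases or $y$ decreases, and each term $x\mip q\cc$ is itself monotone in $x$.

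For (iii), I would note that every open down-set of $\Lh$ is either $\Lh$ or a union of basic sets $\downset q\mm$ for $q\in\Q\cap(0,1]$, reducing the task to proving that each $\mip^{-1}(\downset q\mm)$ is open in $\domM$. I would then establish
\[
\mip^{-1}(\downset q\mm) \;=\; \bigcup_{\substack{a,b\in\Q\cap\ui\\ a<b,\ b-a\leq q}} \bigl((\downset b\mm\times\upset a\cc)\cap\domM\bigr).
\]
The inclusion $\supseteq$ follows from (ii) and the identity $b\mm\mip a\cc=(b-a)\mm\leq q\mm$. For $\subseteq$, one unfolds the four cases of the definition of $\mip$: whenever the real-valued difference $r-s$ between $x$ and $y$ is strictly less than $q$, rationals $a\leq s$ and $b\geq r$ with $a<b$ and $b-a\leq q$ can be chosen with room to spare; the boundary possibility $r-s=q$ produces an output lying in $\downset q\mm$ only in the configuration $(x,y)=(r\mm,s\cc)$, in which case $s\in\Q$ and $r=q+s\in\Q$, so the choice $a=s$, $b=r$ is legitimate. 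Each open rectangle $\downset b\mm\times\upset a\cc$ being open in $\Lh\times\Lh^\op$, this exhibits the preimage as open.

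For (iv), open up-sets of $\Lh$ are generated by $\{\upset p\cc\mid p\in\Q\cap\ui\}$, so it suffices to show $\miss^{-1}(\upset p\cc)$ is open for each $p\in\Q\cap(0,1]$. The key observation is that $p\cc$ is \emph{join-prime}: if $\bigvee_i a_i\geq p\cc$ then some $a_i\geq p\cc$, for otherwise all $a_i$ are bounded by $p\mm$ and so is their join. Hence $x\miss y\geq p\cc$ iff there exists $q\in\Q\cap\ui$ with $y<q\cc\leq x$ and $x\mip q\cc\geq p\cc$, and a quick case-check on whether $x$ is of $\cc$- or $\mm$-type shows the last condition is equivalent to $x\geq (p+q)\cc$ (which automatically entails $q\cc\leq x$). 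Therefore
\[
\miss^{-1}(\upset p\cc) \;=\; \bigcup_{q\in\Q\cap(0,1-p]} \bigl((\upset(p+q)\cc\times\downset q\mm)\cap\domM\bigr),
\]
a union of open rectangles. The main obstacle is the case analysis in (iii) around the boundary $r-s=q$: the formula for $\mip$ switches between $\cc$- and $\mm$-outputs precisely when the difference is rational, and one must verify that the apparently awkward $(r\mm,s\cc)$-configuration at the boundary is already a rational-rational configuration, so that the chosen approximants $a=s$, $b=r$ can attain the bound $b-a=q$ exactly.
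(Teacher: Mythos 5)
Your overall strategy is sound and genuinely different from the paper's: the authors do not prove this lemma directly but invoke the general duality theory of double quasi-operator algebras \cite{GP2007,GP2007b}, whereas you give a self-contained verification from the explicit formulas. Parts (i) and (ii) go through as you describe, and part (iv) is correct: the join-primality of $p\cc$, the equivalence of $x\mip q\cc\geq p\cc$ with $x\geq(p+q)\cc$ (checked separately for $x$ of $\cc$- and $\mm$-type), and the resulting description of $\miss^{-1}(\upset p\cc)$ as a union of open rectangles are all exact.

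There is, however, a genuine gap in part (iii): your claimed identity for $\mip^{-1}(\downset q\mm)$ misses every point whose first coordinate is the top element $1\cc$. Since $1\cc$ lies in no set of the form $\downset b\mm$, no rectangle $\downset b\mm\times\upset a\cc$ can contain such a point; yet, for instance, $(1\cc,1\cc)\in\domM$ and $1\cc\mip 1\cc=0\cc\leq q\mm$, so $(1\cc,1\cc)$ lies in the left-hand side for every $q\in\Q\cap(0,1]$. The obstruction is that $1\cc$ is an isolated point of $\Lh$ (note $\upset 1\cc=\{1\cc\}$) whose basic neighbourhoods are all up-sets, so your ``room to spare'' choice of a rational $b>r$ with $x\leq b\mm$ is impossible when $x=1\cc$, even though $r-s<q$. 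The lemma is not in danger and the repair is local: adjoin to your union the sets $(\Lh\times\upset a\cc)\cap\domM$ for rational $a>1-q$. These are sound, since $y\geq a\cc$ and $y\leq x\leq 1\cc$ give $x\mip y\leq 1\cc\mip a\cc=(1-a)\cc\leq q\mm$; and they cover all the omitted points, because each $(1\cc,y)$ in the preimage forces (by the case analysis on $y=s\cc$ or $y=s\mm$) that $s>1-q$, so some rational $a\in(1-q,s)$ or $a\in(1-q,s]$ satisfies $y\geq a\cc$. With this one extra family of rectangles your argument for (iii) is complete.
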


\subsection{The retraction $\Lh\epi \ui$}\label{s:retraction-Lh-ui}
In this section we show that, with respect to appropriate topologies, the unit interval $\ui$ can be obtained as a topological retract of $\Lh$, in a way which is compatible with the operation $\mip$. This will be important in Sections~\ref{s:spaces-of-measures} and~\ref{s:stone-pairing}, where we need to move between \ui-valued and $\Lh$-valued measures.
Let us define the monotone surjection given by collapsing the doubled elements:
\begin{equation}\label{eq:gamma}
\gamma\colon \Lh \to \ui, \ r\mm, r\cc \mapsto r.
\end{equation}
The map $\gamma$ has a right adjoint, given by
\begin{equation}\label{eq:ic}
\ic\colon \ui \to \Lh, \ r \mapsto
        \begin{cases}
            r\cc & \text{if } r\in \Q \\
            r\mm & \text{otherwise}.
        \end{cases}
\end{equation}
Indeed, it is readily seen that $\gamma(y)\leq x$ iff $y\leq \ic(x)$, for all $y\in \Lh$ and $x\in \ui$. The composition $\gamma\cdot\ic$ coincides with the identity on $\ui$, i.e.\ $\ic$ is a section of $\gamma$. Moreover, this retraction lifts to a topological retract provided we equip $\Lh$ and $\ui$ with the topologies consisting of the open down-sets:

\begin{lemma}\label{l:gamma-ic-p-morphisms}
    The map $\gamma\colon \Lh \to \ui$ is continuous and the map $\ic\colon \ui \to \Lh$ is lower semicontinuous.
\end{lemma}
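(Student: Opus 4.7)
The plan is to verify both claims by computing preimages of (sub)basic open sets. The topology of $\ui$ has the subbasis $\{[0, a), (b, 1] \mid a, b \in \Q \cap \ui\}$, while that of $\Lh$ has the subbasis $\{\downset q\mm, \upset p\cc \mid q \in \Q \cap (0, 1], \, p \in \Q \cap \ui\}$; moreover, the sets $\downset q\mm$ together with $\Lh$ itself form a basis for the topology of open down-sets of $\Lh$, which will be used for the lower semicontinuity of $\ic$. The key observation is that for $x \in \Lh$ and rational $q \in (0, 1]$, one has $x \leq q\mm$ iff either $\gamma(x) < q$ or $x = q\mm$, with a dual statement for $p\cc \leq x$; both are immediate from the order description of $\Lh$.

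For the continuity of $\gamma$, I will verify
\[\gamma^{-1}([0,a)) = \bigcup_{q \in \Q \cap (0, a)} \downset q\mm \qtq{and} \gamma^{-1}((b,1]) = \bigcup_{p \in \Q \cap (b, 1]} \upset p\cc,\]
both of which are open in $\Lh$. Each identity follows from the basic inequalities above combined with density of $\Q$ in $\ui$, distinguishing the cases $x = r\cc$ and $x = r\mm$.

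For the lower semicontinuity of $\ic$, let $U$ be an open down-set of $\Lh$ and $r \in \ic^{-1}(U)$; it suffices to exhibit an open neighbourhood of $r$ inside $\ic^{-1}(U)$. If $U = \Lh$ there is nothing to do, so assume $U \neq \Lh$. Since $U$ is open and $\ic(r) \in U$, there is a basic open $W = \bigcap_{i \leq n} \downset q_i\mm \cap \bigcap_{j \leq m} \upset p_j\cc$ with $\ic(r) \in W \subseteq U$; because any intersection of sets $\upset p\cc$ contains the top element $1\cc$ and $U$ (being a down-set $\neq \Lh$) does not, we must have $n \geq 1$. Setting $q = \min_i q_i \in \Q \cap (0, 1]$, the containment $\ic(r) \in W$ gives $r < q$ and $q\mm \in W \subseteq U$, whence $\downset q\mm \subseteq U$ by the down-set property. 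Thus $[0, q) = \ic^{-1}(\downset q\mm)$ is an open neighbourhood of $r$ contained in $\ic^{-1}(U)$.

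The only real obstacle is the bookkeeping around the doubled rationals: one must systematically distinguish $r \in \Q$ and $r \notin \Q$, since $\ic$ selects the $\cc$- or $\mm$-copy accordingly. Once this case split is carried out in the basic inequalities, the remaining verifications are mechanical.
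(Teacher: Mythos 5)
Your proof is correct and follows essentially the same route as the paper's: continuity of $\gamma$ is verified by writing the preimages of the subbasic opens of $\ui$ as unions of the sets $\upset p\cc$ and $\downset q\mm$, and lower semicontinuity of $\ic$ rests on the computation $\ic\inv(\downset q\mm)=[0,q)$. The only difference is that you make explicit why it suffices to test $\ic$ on the sets $\downset q\mm$ (namely, that these together with $\Lh$ generate the open down-sets), a step the paper leaves implicit.
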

\begin{proof}
    To check continuity of $\gamma$ observe that, for a rational $q\in (0,1)$, $\gamma\inv(q,1]$ and $\gamma\inv[0,q)$ coincide, respectively, with the open sets
    \[
         \bigcup \{ \upset p\cc \mid  p\in \Q\cap\ui \text{ and } q < p \} \ete{and}
         \bigcup \{ \downset p\mm \mid  p\in \Q\cap (0,1] \text{ and } p < q \}.
     \]
    Also, $\ic$ is lower semicontinuous, for $\ic\inv(\downset q\mm) = [0, q)$ whenever $q\in \Q\cap(0,1]$.\qed
\end{proof}

It is easy to see that both $\gamma$ and $\ic$ preserve the minus structure available on $\Lh$ and \ui{} (the unit interval is equipped with the usual minus operation $x - y$ defined whenever $y\leq x$), that is,
\begin{itemize}
    \item  $\gamma(x \mip y) = \gamma(x \miss y) = \gamma(x) - \gamma(y)$ whenever $y \leq x$ in $\Lh$, and
    \item  $\ic(x - y) = \ic(x) \mip \ic(y)$ whenever $y \leq x$ in \ui{}.
\end{itemize}

\begin{remark*}
$\ic\colon \ui\to\Lh$ is not upper semicontinuous because, for every $q\in\Q\cap\ui$,
$\ic\inv(\upset q\cc)=\{x\in\ui\mid q\cc \leq \ic(x)\}=\{x\in \ui\mid \gamma(q\cc)\leq x\}=[q,1]$.
\end{remark*}

\section{Spaces of measures valued in $\Lh$ and in $\ui$}\label{s:spaces-of-measures}
The aim of this section is to replace $\ui$-valued measures  by $\Lh$-valued measures. The reason for doing this is two-fold. First, the space of $\Lh$-valued measures is Priestley (Proposition \ref{p:measures-Priestley}), and thus amenable to a duality theoretic treatment and a dual logic interpretation (cf.\ Section \ref{s:logic-of-measures}). Second, it retains more topological information than the space of $\ui$-valued measures. Indeed, the former retracts onto the latter (Theorem \ref{th:retraction-measures}).

Let $D$ be a distributive lattice. Recall that, classically, a monotone function $m\colon D\to \ui$ is a (finitely additive, probability) measure provided $m(0) = 0$, $m(1) = 1$, and $m(a) + m(b)=m(a\vee b) + m(a\wedge b)$ for every $a, b\in D$. The latter property is equivalently expressed as
\begin{align}
    \forall a,b\in D, \ m(a)-m(a\wedge b)=m(a\vee b)-m(b).
    \label{e:finite-additive-with-minuses}
\end{align}
We write $\Meac(D)$ for the set of all measures $D\to\ui$, and regard it as an ordered topological space, with the structure induced by the product order and product topology of $\ui^D$. The notion of (finitely additive, probability) $\Lh$-valued measure is analogous to the classical one, except that the finite additivity property \eqref{e:finite-additive-with-minuses} splits into two conditions, involving $\mip$ and $\miss$.
\begin{definition}\label{d:measure}
    Let $D$ be a distributive lattice. A \emph{$\Lh$-valued measure} (or simply a \emph{measure}) on $D$ is a function $\mu\colon D\to \Lh$ such that 
    \begin{enumerate}
        \item $\mu(0) = 0\cc$ and $\mu(1) = 1\cc$,
        \item $\mu$ is monotone, and
        \item for all $a,b\in D$,
            \[ \mu(a)\miss \mu(a\wedge b) \leq \mu(a\vee b) \mip \mu(b) \ete{ and } \mu(a) \mip \mu(a\wedge b) \geq \mu(a\vee b) \miss \mu(b).\]
    \end{enumerate}
    We denote by $\Mea(D)$ the subspace of $\Lh^D$ consisting of the measures $\mu\colon D\to \Lh$.
\end{definition}

Since $\Lh$ is a Priestley space, so is $\Lh^D$ equipped with the product order and topology. Hence, we regard $\Mea(D)$ as an ordered topological space, whose topology and order are induced by those of $\Lh^D$. In fact $\Mea(D)$ is a Priestley space:
\begin{proposition}\label{p:measures-Priestley}
    For any distributive lattice $D$, $\Mea(D)$ is a Priestley space.
\end{proposition}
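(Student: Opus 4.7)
Since $\Lh$ is a Priestley space, so is $\Lh^D$ with the product order and topology, and every closed subspace of a Priestley space is again Priestley with the inherited order and subspace topology. The plan is therefore to show that $\Mea(D)$ is a closed subspace of $\Lh^D$, by checking that each of the three conditions of Definition~\ref{d:measure} cuts out a closed subset.

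Conditions (1) and (2) are straightforward. The normalisations $\mu(0)=0\cc$ and $\mu(1)=1\cc$ define closed subsets because each evaluation $\ev_a\colon \Lh^D\to\Lh$ is continuous and singletons are closed in the Hausdorff space $\Lh$. Monotonicity is the intersection, over all $a\leq b$ in $D$, of the preimages under the continuous maps $(\ev_a,\ev_b)\colon \Lh^D\to \Lh\times\Lh$ of the order relation of $\Lh$, which is closed in $\Lh\times\Lh$ by definition of a compact ordered space; so this condition also yields a closed set. Let $M\sue\Lh^D$ denote the resulting closed subset of monotone functions satisfying (1).

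The main obstacle is condition (3), because $\mip$ and $\miss$ are only lower and upper semicontinuous, respectively (Lemma~\ref{l:properties-of-mip}), so closedness cannot be obtained merely by composing continuous maps into the closed order relation of $\Lh$. For fixed $a,b\in D$, monotonicity ensures that the pairs $(\mu(a),\mu(a\wedge b))$ and $(\mu(a\vee b),\mu(b))$ lie in $\domM$ for every $\mu\in M$, so the maps $F,G\colon M\to\Lh$ given by $F(\mu)=\mu(a)\miss\mu(a\wedge b)$ and $G(\mu)=\mu(a\vee b)\mip\mu(b)$ are well defined. To show that $\{\mu\in M\mid F(\mu)\leq G(\mu)\}$ is closed, I would argue that its complement is open, exploiting the total order-disconnectedness of $\Lh$: if $F(\mu)\not\leq G(\mu)$, pick a clopen up-set $V\sue\Lh$ with $F(\mu)\in V$ and $G(\mu)\notin V$; then $F^{-1}(V)$ is open by upper semicontinuity of $\miss$ (and continuity of the evaluations), $G^{-1}(\Lh\setminus V)$ is open by lower semicontinuity of $\mip$, and their intersection is an open neighbourhood of $\mu$ on which the inequality continues to fail, for otherwise $G(\nu)$ would lie in the up-set $V$. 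The companion inequality of~(3) is treated by the symmetric argument, with the roles of $\mip$ and $\miss$ swapped, and intersecting the resulting closed sets over all $a,b\in D$ shows that $\Mea(D)$ is closed in $\Lh^D$, completing the proof.
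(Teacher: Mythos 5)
Your proof is correct and follows essentially the same route as the paper: exhibit $\Mea(D)$ as a closed subspace of the Priestley space $\Lh^D$, handling conditions (1)--(2) via continuity of the evaluation maps and closedness of the order, and condition (3) via the semicontinuity of $\mip$ and $\miss$ from Lemma~\ref{l:properties-of-mip}. The only difference is that where the paper cites as a black box the order-topological fact that $\{x\mid g(x)\leq f(x)\}$ is closed for $g$ upper and $f$ lower semicontinuous, you prove it inline by separating $F(\mu)\not\leq G(\mu)$ with a clopen up-set of $\Lh$ --- a correct and slightly more self-contained rendering of the same step.
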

\begin{proof}
    It suffices to show that $\Mea(D)$ is a closed subspace of $\Lh^D$. Let
\[ 
C_{1,2} = \{f\in \Lh^D\mid f(0) = 0\cc\}\cap \{f\in \Lh^D\mid f(1) = 1\cc\} \cap \bigcap_{a\leq b} \{ f\in \Lh^D \mid f(a) \leq f(b) \}. 
\]
Note that the evaluation maps $\ev_a\colon \Lh^D\to \Lh$, $f\mapsto f(a)$, are continuous for every $a\in D$. Thus, the first set in the intersection defining $C_{1,2}$ is closed because it is the equaliser of the evaluation map $\ev_0$ and the constant map of value $0\cc$. Similarly, for the set $\{f\in \Lh^D\mid f(1) = 1\cc\}$. The last one is the intersection of the sets of the form $\langle \ev_a, \ev_b\rangle\inv (\leq)$, which are closed because $\leq$ is closed in $\Lh\times \Lh$. Whence, $C_{1,2}$ is a closed subset of $\Lh^D$. Moreover,
    \begin{align*}
        \Mea(D) = \bigcap_{a,b\in D} &\{ f\in C_{1,2} \mid f(a)\miss f(a\wedge b) \leq f(a\vee b) \mip f(b) \} \\
        \cap \bigcap_{a,b\in D} &\{ f\in C_{1,2} \mid f(a)\mip f(a\wedge b) \geq f(a\vee b) \miss f(b) \}.
    \end{align*}
    From semicontinuity of $\mip$ and $\miss$ (Lemma \ref{l:properties-of-mis}) and the following well-known fact in order-topology we conclude that $\Mea(D)$ is closed in $\Lh^D$.
    \begin{fact}
    Let $X,Y$ be compact ordered spaces, $f\colon X \to Y$ a lower semicontinuous function and $g\colon X\to Y$ an upper semicontinuous function. If $X'$ is a closed subset of $X$, then so is $E = \{ x\in X' \mid  g(x) \leq f(x)\}$.\qed
\end{fact}
\end{proof}

Next, we prove a property which is very useful when approximating a fragment of a logic by smaller fragments (see, e.g., Section \ref{s:Gamma-valued-Stone-pairing}). 
Let us denote by $\mathbf{DLat}$ the category of distributive lattices and homomorphisms, and by $\mathbf{Pries}$ the category of Priestley spaces and continuous monotone maps.
\begin{proposition}\label{p:functor-mea}
  The assignment $D\mapsto \Mea(D)$ yields a contravariant functor $\Mea\colon \mathbf{DLat}\to \mathbf{Pries}$ which sends directed colimits to codirected limits.
\end{proposition}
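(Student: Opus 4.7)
The first task is to define $\Mea$ on morphisms. Given a homomorphism $h\colon D\to E$ of distributive lattices, I set $\Mea(h)\colon \Mea(E)\to\Mea(D)$, $\mu\mapsto \mu\cdot h$. Since $h$ preserves $0,1$ and the lattice operations, $\mu\cdot h$ satisfies items (1)--(3) of Definition~\ref{d:measure} whenever $\mu$ does, so $\Mea(h)$ is well-defined; monotonicity is immediate, and continuity follows because $\ev_a \cdot \Mea(h) = \ev_{h(a)}$ for every $a\in D$, so $\Mea(h)$ is the restriction of a continuous map $\Lh^E\to \Lh^D$ between product spaces. Functoriality $\Mea(g\cdot h)=\Mea(h)\cdot\Mea(g)$ and preservation of identities are then immediate from associativity of composition.

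For the preservation of directed colimits, let $(D_i, h_{ij})_{i\leq j}$ be a directed diagram in $\mathbf{DLat}$ with colimit cocone $(\iota_i\colon D_i\to D)$. Applying $\Mea$ gives a cone over the codirected diagram $(\Mea(D_i), \Mea(h_{ij}))$, inducing a comparison map
\[
\Phi\colon \Mea(D) \to \varprojlim \Mea(D_i), \qquad \mu\mapsto (\mu\cdot \iota_i)_i,
\]
which is automatically continuous and monotone. The plan is to show $\Phi$ is a bijection. Injectivity is immediate because directed colimits of distributive lattices are computed in $\mathbf{Set}$, so $D=\bigcup_i \iota_i(D_i)$, and agreement of two measures on every $\iota_i(D_i)$ forces equality.

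For surjectivity, take a compatible family $(\mu_i)_i\in\varprojlim\Mea(D_i)$, i.e.\ $\mu_i = \mu_j\cdot h_{ij}$ whenever $i\leq j$. Define $\mu\colon D\to \Lh$ by $\mu(a)=\mu_i(b)$ for any representative $b\in D_i$ with $\iota_i(b)=a$. This is well-defined: if $\iota_i(b)=\iota_j(b')$, pick $k\geq i,j$ with $h_{ik}(b)=h_{jk}(b')$; then by compatibility
\[
\mu_i(b)=\mu_k(h_{ik}(b))=\mu_k(h_{jk}(b'))=\mu_j(b').
\]
To check $\mu\in\Mea(D)$, observe that each axiom in Definition~\ref{d:measure} involves only finitely many elements of $D$; by directedness they can be pulled back simultaneously to a single $D_i$, and the axiom transfers pointwise from $\mu_i$ (which is a measure) to $\mu$. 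By construction $\Phi(\mu)=(\mu_i)_i$.

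Finally, $\Phi$ is a continuous monotone bijection between Priestley spaces. Since $\Mea(D)$ is compact (Proposition~\ref{p:measures-Priestley}) and the target is Hausdorff (being a closed subspace of a product of Priestley spaces), $\Phi$ is a homeomorphism. Its inverse is monotone because $\Phi$ reflects the order: if $\Phi(\mu_1)\leq\Phi(\mu_2)$ then $\mu_1\leq\mu_2$ on each $\iota_i(D_i)$, hence on all of $D$. Thus $\Phi$ is an isomorphism in $\mathbf{Pries}$. I do not anticipate any serious obstacle; the only delicate point is the surjectivity argument, which rests entirely on the set-theoretic fact that any finite configuration in $D$ is already witnessed inside some $D_i$.
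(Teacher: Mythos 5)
Your proposal is correct and follows essentially the same route as the paper: $\Mea(h)$ is defined by precomposition, checked to be monotone and continuous, and functoriality is immediate. The paper dismisses the preservation of directed colimits as ``a routine verification''; your explicit argument (comparison map, well-definedness via directedness, transfer of the finitary measure axioms, and the compactness/Hausdorff argument upgrading the continuous monotone bijection to an isomorphism) is exactly the verification being alluded to.
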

\begin{proof}
    If $h\colon D\to E$ is a lattice homomorphism and $\mu\colon E\to \Lh$ is a measure, it is not difficult to see that $\Mea(h)(\mu)=\mu \cdot h\colon D\to \Lh$ is a measure.
    The mapping $\Mea(h)\colon \Mea(E) \to \Mea(D)$ is clearly monotone. For continuity, recall that the topology of $\Mea(D)$ is generated by the sets $\MBD{a}{q} = \{ \nu\colon D\to \Lh \mid  \nu(a) < q\cc \}$ and $\MBU{a}{q} = \{ \nu\colon D\to \Lh \mid  \nu(a) \geq q\cc \}$,
with $a\in D$ and $q\in \Q\cap \ui$. We have
    \begin{equation*}
        \Mea(h)\inv(\MBD{a}{q})=\{ \mu\colon E\to \Lh \mid  \mu(h(a)) < q\cc \} = \MBD{h(a)}{q}
    \end{equation*}
    which is open in $\Mea(E)$. Similarly, $\Mea(h)\inv(\MBU{a}{q})=\MBU{h(a)}{q}$, showing that $\Mea(h)$ is continuous.
Thus, $\Mea$ is a contravariant functor.

The rest of the proof is a routine verification.\qed 
\end{proof}

\begin{remark}\label{rm:functor-M-cov}
  We work with the contravariant functor $\Mea\colon \mathbf{DLat}\to \mathbf{Pries}$ because $\Mea$ is concretely defined on the lattice side. However, by Priestley duality, $\mathbf{DLat}$ is dually equivalent to $\mathbf{Pries}$, so we can think of $\Mea$ as a covariant functor $\mathbf{Pries}\to\mathbf{Pries}$ (this is the perspective traditionally adopted in analysis, and also in the works of Ne{\v s}et{\v r}il and Ossona de Mendez). From this viewpoint, Section~\ref{s:logic-of-measures} provides a description of the endofunctor on $\mathbf{DLat}$ dual to $\Mea\colon \mathbf{Pries}\to\mathbf{Pries}$.
\end{remark}
Recall the maps $\gamma\colon \Lh\to \ui$ and $\ic\colon \ui\to\Lh$ from equations \eqref{eq:gamma}--\eqref{eq:ic}. In Section \ref{s:retraction-Lh-ui} we showed that this is a retraction-section pair. In Theorem \ref{th:retraction-measures} this retraction is lifted to the spaces of measures. We start with an easy observation:
\begin{lemma}\label{l:measure-maps}
Let $D$ be a distributive lattice. The following statements hold:
\begin{enumerate}
\item for every $\mu\in\Mea(D)$, $\gamma\cdot\mu\in\Meac(D)$,
\item for every $m\in\Meac(D)$, $\ic\cdot m\in\Mea(D)$.
\end{enumerate}
\end{lemma}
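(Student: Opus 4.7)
The plan is to verify the three defining conditions of a measure in each case, leveraging the compatibility between the retraction maps and the algebraic operations on $\Lh$ and $\ui$ that was recorded just before the Remark in Section~\ref{s:retraction-Lh-ui}. Boundary preservation and monotonicity will be immediate; the real content lies in translating the additivity conditions back and forth.

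For (1), I would first check $(\gamma\cdot\mu)(0)=\gamma(0\cc)=0$ and $(\gamma\cdot\mu)(1)=\gamma(1\cc)=1$, and note that $\gamma\cdot\mu$ is monotone as the composite of monotone maps. For finite additivity in the form \eqref{e:finite-additive-with-minuses}, I would take the two inequalities of Definition~\ref{d:measure}(3) and apply $\gamma$ to each. Using monotonicity of $\gamma$ together with the identity $\gamma(x\mip y)=\gamma(x\miss y)=\gamma(x)-\gamma(y)$ (valid whenever $y\leq x$ in $\Lh$), the two inequalities collapse respectively to
\[
(\gamma\cdot\mu)(a)-(\gamma\cdot\mu)(a\wedge b)\leq (\gamma\cdot\mu)(a\vee b)-(\gamma\cdot\mu)(b)
\]
and its reverse, yielding the required equality.

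For (2), again $(\ic\cdot m)(0)=\ic(0)=0\cc$ and $(\ic\cdot m)(1)=\ic(1)=1\cc$ (since $0,1\in\Q$), and monotonicity is inherited. The additivity \eqref{e:finite-additive-with-minuses} for $m$ together with $\ic(x-y)=\ic(x)\mip\ic(y)$ (for $y\leq x$ in $\ui$) gives
\[
\ic(m(a))\mip\ic(m(a\wedge b))=\ic(m(a\vee b))\mip\ic(m(b)).
\]
To derive the two inequalities required by Definition~\ref{d:measure}(3), I would establish the auxiliary inequality $x\miss y\leq x\mip y$ for all $(x,y)\in\domM$: from the definition $x\miss y=\bigvee\{x\mip q\cc\mid y<q\cc\leq x\}$ in \eqref{eq:def-of-miss}, antitonicity of $\mip$ in its second coordinate (Lemma~\ref{l:properties-of-mip}(2)) gives $x\mip q\cc\leq x\mip y$ for every such $q\cc$, and hence the join is bounded above by $x\mip y$. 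Applying this to both sides of the displayed equality yields the two desired inequalities.

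The one mildly non-routine step is the auxiliary inequality $x\miss y\leq x\mip y$, which is what reconciles the splitting of classical additivity into two one-sided conditions in the $\Lh$-valued setting; everything else is a direct transport of structure along $\gamma$ and $\ic$ using the compatibilities listed in Section~\ref{s:retraction-Lh-ui}.
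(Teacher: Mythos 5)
Your proposal is correct and follows essentially the same route as the paper's proof: part (1) via $\gamma$ preserving both minus operations so that the two one-sided conditions collapse to the classical additivity equality, and part (2) via $\ic(x-y)=\ic(x)\mip\ic(y)$ together with the auxiliary inequality $x\miss y\leq x\mip y$. The only difference is that you supply an explicit derivation of $x\miss y\leq x\mip y$ from \eqref{eq:def-of-miss} and antitonicity of $\mip$ in its second argument, which the paper simply quotes from the general theory; your derivation is valid.
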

\begin{proof}
$1$. The only non-trivial condition to verify is finite additivity. In view of the discussion after Lemma \ref{l:gamma-ic-p-morphisms}, the map $\gamma$ preserves both minus operations on $\Lh$. Hence, for every $a,b\in D$, the inequalities 
$\mu(a) \miss \mu(a\wedge b) \leq \mu(a\vee b) \mip \mu(b)$ and
 $\mu(a) \mip \mu(a\wedge b) \geq \mu(a\vee b) \miss \mu(b)$ imply that
$\gamma\cdot\mu(a)-\gamma\cdot\mu(a\wedge b)=\gamma\cdot\mu(a\vee b)-\gamma\cdot\mu(b)$.

$2$. The first two conditions in Definition~\ref{d:measure} are immediate. The third condition follows from the fact that $\ic(r - s) = \ic(r) \mip \ic(s)$ whenever $s \leq r$ in \ui, and $x\miss y\leq x\mip y$ for every $(x,y)\in \domM$.\qed
\end{proof}

In view of the previous lemma, there are well-defined functions
\[
    \LG\colon \Mea(D) \to \Meac(D), \  \mu \mapsto \gamma\cdot \mu \ee{\text{ and }}
       \LC\colon \Meac(D) \to \Mea(D), \ m \mapsto \ic\cdot m.
\]
\begin{lemma}\label{l:gamma-continuous-monotone}
$\LG\colon \Mea(D) \to \Meac(D)$ is a continuous and monotone map.
\end{lemma}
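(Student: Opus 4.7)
The plan is to reduce both properties to the corresponding properties of the map $\gamma\colon \Lh\to \ui$ (continuity and monotonicity, the former established in Lemma~\ref{l:gamma-ic-p-morphisms} and the latter immediate from the definition of $\gamma$). Since $\Mea(D)$ and $\Meac(D)$ carry the subspace topologies and orders inherited from the product spaces $\Lh^D$ and $\ui^D$ respectively, and $\LG$ is just pointwise post-composition with $\gamma$, the result should follow by a coordinatewise argument.

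For monotonicity, I would simply observe that if $\mu_1 \leq \mu_2$ in $\Mea(D)$, i.e. $\mu_1(a) \leq \mu_2(a)$ in $\Lh$ for every $a\in D$, then monotonicity of $\gamma$ gives $\gamma\cdot\mu_1(a) \leq \gamma\cdot\mu_2(a)$ in $\ui$ for every $a$, hence $\LG(\mu_1) \leq \LG(\mu_2)$ in $\Meac(D)$.

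For continuity, I would use the universal property of the product topology on $\ui^D$: a map with codomain $\ui^D$ is continuous iff its composition with each projection $\pi_a\colon \ui^D\to \ui$ is continuous. Viewing $\LG$ as a map $\Mea(D)\to \ui^D$ (which factors through the subspace $\Meac(D)$ by Lemma~\ref{l:measure-maps}), the $a$-th component is
\[
\pi_a\cdot \LG\colon \Mea(D)\to \ui,\quad \mu\mapsto \gamma(\mu(a)) = \gamma\cdot \ev_a(\mu).
\]
The evaluation $\ev_a\colon \Mea(D)\to \Lh$ is continuous (as the restriction of the product projection $\Lh^D\to \Lh$), and $\gamma$ is continuous by Lemma~\ref{l:gamma-ic-p-morphisms}. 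Therefore $\pi_a\cdot \LG$ is continuous for every $a\in D$, which gives continuity of $\LG$ into $\ui^D$, and hence into the subspace $\Meac(D)$.

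There is no real obstacle here: the lemma is a straightforward lifting of pointwise properties of $\gamma$ to the spaces of measures, the only point requiring any care being to keep track of the fact that $\Mea(D)$ and $\Meac(D)$ are equipped with the subspace structures induced by the product topologies and orders on $\Lh^D$ and $\ui^D$.
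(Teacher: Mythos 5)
Your proposal is correct and follows essentially the same route as the paper: both arguments reduce continuity to continuity of $\gamma$ (Lemma~\ref{l:gamma-ic-p-morphisms}) via the product/subspace structure of $\Meac(D)$ --- the paper checks preimages of subbasic opens directly, while you invoke the universal property of the product topology, which amounts to the same computation. Monotonicity is handled pointwise in both cases.
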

\begin{proof}
The topology of $\Meac(D)$ is generated by the sets of the form $\{m\in \Meac(D)\mid m(a)\in O\}$, for $a\in D$ and $O$ an open subset of $\ui$. In turn,
\[
(\LG)\inv \{m\in \Meac(D)\mid m(a)\in O\}=\{\mu\in\Mea(D)\mid \mu(a)\in \gamma\inv (O)\}
\]
is open in $\Mea(D)$ because $\gamma\colon \Lh\to\ui$ is continuous by Lemma \ref{l:gamma-ic-p-morphisms}. This shows that $\LG\colon \Mea(D) \to \Meac(D)$ is continuous. 
Monotonicity is immediate.\qed
\end{proof}

Note that $\LG\colon \Mea(D) \to \Meac(D)$ is surjective, since it admits $\LC$ as a (set-theoretic) section. It follows that $\Meac(D)$ is a compact ordered space:
\begin{corollary}\label{c:class-meas-compact-pospace}
For each distributive lattice $D$, $\Meac(D)$ is a compact ordered space.
\end{corollary}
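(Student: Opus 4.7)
}

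My plan is to deduce the corollary directly from Lemma~\ref{l:gamma-continuous-monotone} together with Proposition~\ref{p:measures-Priestley}, noting that $\Meac(D)$ is carved out of the ambient space $\ui^D$, which is itself a compact ordered space in the product order and product topology.

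First I would observe that, by definition, $\Meac(D)$ sits inside $\ui^D$ with the induced order and topology. Since $\ui$ is a compact ordered space and compact ordered spaces are closed under products, $\ui^D$ is a compact ordered space; in particular, its order $\leq$ is closed in $\ui^D\times\ui^D$. Restricting this closed order to the subset $\Meac(D)\times\Meac(D)$ shows that the order of $\Meac(D)$ is closed in $\Meac(D)\times\Meac(D)$. Hausdorffness is immediate since $\Meac(D)$ is a subspace of the Hausdorff space $\ui^D$.

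The remaining point is compactness. Here I would use the retraction machinery just established: by Lemma~\ref{l:measure-maps}, the map $\LG\colon\Mea(D)\to\Meac(D)$ is well defined, and $\LC\colon\Meac(D)\to\Mea(D)$ is a set-theoretic section of it, so $\LG$ is surjective. By Lemma~\ref{l:gamma-continuous-monotone}, $\LG$ is continuous. Since $\Mea(D)$ is a Priestley space by Proposition~\ref{p:measures-Priestley}, in particular it is compact, and therefore its continuous image $\Meac(D)=\LG(\Mea(D))$ is compact. Combining this with the previous paragraph, $\Meac(D)$ is a compact ordered space.

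There is no real obstacle in this argument; the corollary is essentially a harvesting of the work done earlier. The only minor subtlety is to make sure one is comparing the correct structures on $\Meac(D)$: the order and topology on $\Meac(D)$ coming from viewing it as a subset of $\ui^D$ coincide with those transported via $\LG$ in the sense that $\LG$ is continuous and monotone for them, which is precisely what Lemma~\ref{l:gamma-continuous-monotone} records. Once this is noted, the conclusion is a one-line application of the standard facts that continuous images of compact spaces are compact and that closed partial orders restrict to subspaces. \qed
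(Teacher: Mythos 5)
Your argument is correct and is essentially the paper's own proof: surjectivity of $\LG$ via the section $\LC$, continuity from Lemma~\ref{l:gamma-continuous-monotone}, compactness of $\Mea(D)$ from Proposition~\ref{p:measures-Priestley}, and closedness of the induced order on $\Meac(D)\subseteq\ui^D$. The extra detail you supply on why the order is closed only spells out what the paper dismisses as ``clearly closed.''
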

\begin{proof}
The surjection $\LG\colon \Mea(D) \to \Meac(D)$ is continuous (Lemma~\ref{l:gamma-continuous-monotone}). Since $\Mea(D)$ is compact by Proposition \ref{p:measures-Priestley}, so is $\Meac(D)$.
The order of $\Meac(D)$ is clearly closed in the product topology, thus $\Meac(D)$ is a compact ordered space.\qed
\end{proof}
Finally, we see that the set-theoretic retraction of $\Mea(D)$ onto $\Meac(D)$ lifts to the topological setting, provided we restrict to the down-set topologies. 
If $(X,\leq)$ is a partially ordered topological space, write $X^{\downarrow}$ for the space with the same underlying set as $X$ and whose topology consists of the open down-sets of $X$.
\begin{theorem}\label{th:retraction-measures}
    The maps $\LG\colon \Mea(D)^{\downarrow} \to \Meac(D)^{\downarrow}$ and $\LC\colon \Meac(D)^{\downarrow} \to \Mea(D)^{\downarrow}$ are a retraction-section pair of topological spaces.
\end{theorem}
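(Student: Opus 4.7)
The plan is to verify the three defining properties of a retraction-section pair: the identity $\LG\cdot\LC=\mathrm{id}_{\Meac(D)}$, and continuity of each of $\LG$ and $\LC$ in the down-set topologies. The identity is immediate pointwise from $\gamma\cdot\ic=\mathrm{id}_\ui$. Continuity of $\LG\colon\Mea(D)^\downarrow\to\Meac(D)^\downarrow$ follows at once from Lemma~\ref{l:gamma-continuous-monotone}: $\LG$ is continuous and monotone, so the preimage of any open down-set is both open (by continuity) and a down-set (by monotonicity), hence an open down-set.

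The heart of the argument is continuity of $\LC\colon\Meac(D)^\downarrow\to\Mea(D)^\downarrow$. My approach is to factor $\LC$ through the ambient product spaces $\ui^D$ and $\Lh^D$. By Lemma~\ref{l:gamma-ic-p-morphisms}, $\ic$ is lower semicontinuous; being a right adjoint to $\gamma$, it is also monotone, and so $\ic\colon\ui^\downarrow\to\Lh^\downarrow$ is continuous. Post-composition with $\ic$ then yields a continuous map $\ic_{*}\colon(\ui^\downarrow)^D\to(\Lh^\downarrow)^D$ of product spaces. One next identifies these with the down-set topologies of the ambient products, namely $(\ui^D)^\downarrow=(\ui^\downarrow)^D$ and $(\Lh^D)^\downarrow=(\Lh^\downarrow)^D$. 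This rests on the observation that, for $X\in\{\ui,\Lh\}$, every open down-set of $X^D$ is a union of finite intersections of subbasic opens $\{f\mid f(a)\in V\}$ with $V$ an open down-set of $X$: given a point $f$ in an open down-set $U$, any basic open neighbourhood $B$ of $f$ with $B\subseteq U$ has its down-closure $\downset B$ equal to a finite intersection of subbasic down-set opens, and $f\in\downset B\subseteq U$ since $U$ is a down-set.

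Finally, by Lemma~\ref{l:measure-maps} the map $\ic_{*}$ sends $\Meac(D)$ into $\Mea(D)$, so its restriction is precisely $\LC$. To conclude, one must check that the down-set topologies on the closed subspaces $\Meac(D)\subseteq\ui^D$ and $\Mea(D)\subseteq\Lh^D$ coincide with the subspace topologies inherited from the ambient down-set topologies. This is the main technical point, and it follows from the standard fact that in a compact ordered space the up-closure of a closed set is closed: given an open down-set $U$ of a closed subspace $Y\subseteq X$, the set $X\setminus\upset(Y\setminus U)$ is an open down-set of $X$ whose intersection with $Y$ is exactly $U$. Combining these ingredients yields continuity of $\LC$ as a map $\Meac(D)^\downarrow\to\Mea(D)^\downarrow$, completing the proof.
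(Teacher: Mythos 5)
Your proof is correct, and at its core it rests on the same two facts as the paper's proof: continuity and monotonicity of $\LG$ (Lemma~\ref{l:gamma-continuous-monotone}) for the retraction, and lower semicontinuity of $\ic$ (Lemma~\ref{l:gamma-ic-p-morphisms}) for the section. The difference lies in how continuity of $\LC$ is organised. The paper simply asserts that the topology of $\Mea(D)^{\downarrow}$ is generated by the sets $\{\mu \mid \mu(a)\leq q\mm\}$ with $a\in D$ and $q\in\Q\cap(0,1]$, and then computes $(\LC)\inv$ of such a set to be $\{m\mid m(a)<q\}$, which is open in $\Meac(D)^{\downarrow}$. You instead factor $\LC$ through post-composition with $\ic$ on the ambient products and prove the two topological identifications --- $(X^D)^{\downarrow}=(X^{\downarrow})^D$, and ``the down-set topology of a closed subspace equals the subspace topology induced by the ambient down-set topology'' --- on which the paper's generating-family claim silently relies. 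So your route is longer but supplies exactly the justification the paper omits; in particular the $X\setminus\upset(Y\setminus U)$ argument in a compact ordered space is the standard and correct way to handle the subspace issue. One step in your product identification deserves a word of justification: the claim that $\downset B$ is again a finite intersection of subbasic down-set opens requires the down-closure of an open subset of $X$ to be open, which fails in a general Priestley space but does hold here because $\ui$ and $\Lh$ are chains (if $s=\sup V\notin V$ then $\downset V=X\setminus\upset s$ is open, and the cases with $\sup V\in V$ are checked directly).
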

\begin{proof}
It suffices to show that $\LG$ and $\LC$ are continuous. It is not difficult to see, using Lemma \ref{l:gamma-continuous-monotone}, that $\LG\colon \Mea(D)^{\downarrow} \to \Meac(D)^{\downarrow}$ is continuous. For the continuity of $\LC$, note that the topology of $\Mea(D)^{\downarrow}$ is generated by the sets of the form 
$\{\mu\in \Mea(D)\mid \mu(a)\leq q\mm\}$, for $a\in D$ and $q\in\Q\cap (0,1]$.
 We have 
\begin{align*}
(\LC)\inv \{\mu\in \Mea(D)\mid \mu(a)\leq q\mm\}&=\{m\in\Meac(D)\mid m(a)\in \ic\inv(\downset q\mm)\} \\
&=\{m\in\Meac(D)\mid m(a)<q\},
\end{align*}
which is an open set in $\Meac(D)^{\downarrow}$. This concludes the proof.\qed
\end{proof}

\section{The $\Lh$-valued Stone pairing and limits of finite structures}\label{s:stone-pairing}
In the work of Ne{\v s}et{\v r}il and Ossona de Mendez, the Stone pairing $\left<\ARG,A\right>$ is $\ui$-valued, i.e.\ an element of $\Meac(\FO(\sigma))$.
In this section, we show that basically the same construction for the recognisers arising from the application of a layer of semiring quantifiers in logic on words (cf.\ Section~\ref{s:duality-and-low}) provides an embedding of finite $\sigma$-structures into the space of $\Lh$-valued measures. It turns out that this embedding is a $\Lh$-valued version of the Stone pairing. Hereafter we make a notational difference, writing $\SPc{\ARG,\ARG}$ for the (classical) \ui-valued Stone pairing. 

The main ingredient of the construction are the $\Lh$-valued finitely supported functions.
To start with, we point out that the partial operation $\mip$ on $\Lh$ uniquely determines a partial ``plus'' operation on $\Lh$. 
Define
\[ 
+\colon \domP \to \Lh, \qtq{where} \domP = \{ (x,y) \mid x \leq 1\cc \mip y\} ,
\]
by the following rules (whenever the expressions make sense):
\[
        r\cc+s\cc=(r+s)\cc, \ \ 
        r\mm+s\cc=(r+s)\mm, \ \
        r\cc+s\mm=(r+s)\mm,\ete{and }
        r\mm+s\mm=(r+s)\mm.
\]
Then, for every $y\in \Lh$, $\blank + y$ is left adjoint to $\blank \mip y$.

\begin{definition}
For any set $X$, $\Fs(X)$ is the set of all functions $f\colon X\to \Lh$ s.t.\
\begin{enumerate}
    \item the set $\supp(f) = \{ x\in X \mid   f(x) \not= 0\cc\}$ is finite, and
    \item $f(x_1)+\cdots +f(x_n)$ is defined and equal to $1\cc$, where $\supp(f) = \{x_1, \ldots, x_n\}$.
\end{enumerate}
\end{definition}

To improve readability, if the sum $y_1+\cdots +y_m$ exists in $\Lh$, we denote~it $\sum_{i=1}^{m}{y_i}$. Finitely supported functions in the above sense always determine measures over the power-set algebra (the proof is an easy verification and is omitted):
\begin{lemma}\label{l:fs-lift-to-mea}
    Let $X$ be any set. There is a well-defined mapping $\int\colon\Fs(X) \to \Mea(\P(X))$, assigning to every $f\in \Fs(X)$ the measure 
    \[\textstyle
    \int f\colon M\mapsto \int_M f = \sum \{ f(x) \mid  x\in M\cap \supp(f)\}.
    \]
\end{lemma}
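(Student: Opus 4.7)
The plan is to first observe that the algebraic structure of $\Lh$ forces each $f(x_i)$ to be of the form $r_i\cc$ with $r_i \in \Q \cap (0,1]$. Indeed, by inspection of the rules defining $+$ on $\Lh$, any sum containing a term of the form $r\mm$ yields an element of the form $s\mm$; since the total sum $\sum_{i=1}^n f(x_i)$ equals $1\cc$, no summand can be of the form $r\mm$. Thus $f(x_i) = r_i\cc$ and $r_1 + \cdots + r_n = 1$. Consequently, for any $M \subseteq X$, writing $r_M := \sum_{i : x_i \in M} r_i$, the expression $\int_M f$ is well-defined in $\Lh$ and equal to $r_M\cc$, since $r_M \leq 1$ ensures the relevant $+$-operations stay inside $\domP$.

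With well-definedness in hand, I would then verify the three conditions of Definition~\ref{d:measure}. Conditions~(1) and~(2) are immediate from the formula $\int_M f = r_M\cc$: we have $\int f(\emptyset) = 0\cc$ (empty sum), $\int f(X) = 1\cc$ (by hypothesis), and $M \subseteq N$ implies $r_M \leq r_N$ and therefore $r_M\cc \leq r_N\cc$.

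The heart of the proof is condition~(3). For this I would first establish the auxiliary identity $r\cc \miss s\cc = (r-s)\mm$ whenever $r > s$ in $\Q \cap \ui$ (and $0\cc$ when $r = s$): by \eqref{eq:def-of-miss}, $r\cc \miss s\cc = \bigvee\{(r-q)\cc \mid q \in \Q, \; s < q \leq r\}$, and as $q \to s^+$ through rationals the supremum in the chain $\Lh$ is exactly $(r-s)\mm$, lying immediately below $(r-s)\cc$. Contrast this with $r\cc \mip s\cc = (r-s)\cc$. Combined with the set-theoretic identity $r_M - r_{M \cap N} = r_{M \cup N} - r_N = r_{M \setminus N}$, one then computes
\[
\int_M f \miss \int_{M \cap N} f = r_{M \setminus N}\mm \qtq{(or $0\cc$ when $M \subseteq N$)}
\]
and
\[
\int_{M \cup N} f \mip \int_N f = r_{M \setminus N}\cc.
\]
The first required inequality thus reduces to $r_{M \setminus N}\mm \leq r_{M \setminus N}\cc$, which holds by the cover relation in $\Lh$. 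The symmetric inequality is handled analogously, with the roles of $\mip$ and $\miss$ exchanged.

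The main obstacle is the careful bookkeeping of the doubled order structure of $\Lh$ when computing $\miss$ from its defining supremum in \eqref{eq:def-of-miss}, in particular verifying that this supremum lands on the $\mm$-side rather than the $\cc$-side of the cover. Once this is in hand, and once one recognises that the normalisation $\sum f(x_i) = 1\cc$ forces all weights to be of the form $r\cc$, the remainder of the verification is standard finite arithmetic on the rational weights $r_i$.
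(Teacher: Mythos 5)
Your proof is correct, and it fills in exactly the ``easy verification'' that the paper explicitly omits: the key observations --- that the normalisation $\sum f(x_i)=1\cc$ forces every weight to be a rational $\cc$-value, and that $r\cc \miss s\cc = (r-s)\mm$ while $r\cc \mip s\cc = (r-s)\cc$, so condition~(3) reduces to the cover relation $t\mm \leq t\cc$ --- are precisely the content of the intended routine check. The only cosmetic imprecision is the parenthetical ``when $M \subseteq N$'': the degenerate case is really when $(M\setminus N)\cap\supp(f)=\emptyset$, but this does not affect the argument.
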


\subsection{The $\Lh$-valued Stone pairing and logic on words}\label{s:Gamma-valued-Stone-pairing}
Fix a countably infinite set of variables $\{ v_1, v_2, \dots \}$. Recall that $\FO_n(\sigma)$ is the Lindenbaum-Tarski algebra of first-order formulas with free variables among $\{ v_1, \dots, v_n\}$. The dual space of $\FO_n(\sigma)$ is the space of $n$-types $\Typ_n(\sigma)$. Its points are the equivalence classes of pairs $(A, \alpha)$, where $A$ is a $\sigma$-structure and $\alpha\colon \{ v_1, \dots, v_n\} \to A$ is an interpretation of the variables.
Write $\Fin(\sigma)$ for the set of all finite $\sigma$-structures. We have a map $\Fin(\sigma) \to \Fs(\Typ_n(\sigma))$ defined by $A \mapsto f_n^A$, where $f_n^A$ is the function which sends an equivalence class $E \in \Typ_n(\sigma)$ to
\[
    f_n^A(E)\ = \sum_{(A, \alpha) \in E} \left(\frac{1}{|A|^{n}}\right)\cc
    \quad \parbox{20.5em}{\centering\emph{(Add $\frac{1}{|A|^n}$ for every interpretation $\alpha$ of the free variables s.t.\ $(A, \alpha)$ is in the equivalence class)}.}
 \]
By Lemma~\ref{l:fs-lift-to-mea}, we get a measure $\int f_n^A\colon \P(\Typ_n(\sigma))\to \Lh$. Now, for each $\phi\in\FO_n(\sigma)$, let $\sem[\phi]_n \sue \Typ_n(\sigma)$ be the set of (equivalence classes of) $\sigma$-structures with interpretations satisfying $\phi$. By Stone duality we obtain an embedding $\sem_n\colon \FO_n(\sigma)\mono \P(\Typ_n(\sigma))$. Restricting $\int f_n^A$ to $\FO_n(\sigma)$, we get a measure
\[\textstyle 
\mu_n^A \colon \FO_n(\sigma)\to \Lh,\quad \phi \mapsto \int_{\sem[\phi]_n} f_n^A. 
\]
Summing up, we have the composite map
\begin{equation}\label{eq:comparison-with-logic-on-words}\textstyle
\Fin(\sigma)\to \Mea(\P(\Typ_n(\sigma)))\to \Mea(\FO_n(\sigma)), \quad A\mapsto \int f_n^A\mapsto \mu_n^A.
\end{equation}
Essentially the same construction is featured in logic on words, cf.\ equation \eqref{def:R}:
\begin{itemize}[leftmargin=12pt]
\item The set of finite $\sigma$-structures $\Fin(\sigma)$ corresponds to the set of finite words $A^*$.
\item The collection $\Typ_n(\sigma)$ of (equivalence classes of) $\sigma$-structures with interpretations corresponds to $(A\times 2)^*$ or, interchangeably, $\beta(A\times 2)^*$ (in the case of one free variable).
\item The fragment $\FO_n(\sigma)$ of first-order logic corresponds to the Boolean algebra of languages, defined by formulas with a free variable, dual to the Boolean space $X$ appearing in \eqref{def:R}.
\item The first map in the composite \eqref{eq:comparison-with-logic-on-words} sends a finite structure $A$ to the measure $\int f_n^A$ which, evaluated on $K\sue \Typ_n(\sigma)$, counts the (proportion of) interpretations $\alpha\colon \{v_1,\dots,v_n\}\to A$ such that $(A,\alpha)\in K$, similarly to $R$ from~\eqref{def:R}.
\item Finally, the second map in \eqref{eq:comparison-with-logic-on-words} sends a measure in $\Mea(\P(\Typ_n(\sigma)))$ to its pushforward along $\sem_n\colon \FO_n(\sigma)\mono \P(\Typ_n(\sigma))$. This is the second map in the composition \eqref{def:R}.
\end{itemize}

On the other hand, the assignment $A\mapsto \mu_n^A$ defined in \eqref{eq:comparison-with-logic-on-words} is also closely related to the classical Stone pairing.
Indeed, for every formula $\phi$ in $\FO_n(\sigma)$,
    \begin{align}
        \mu_n^A(\phi)
            = \sum_{(A, \alpha) \in \sem[\phi]_n} \left(\frac{1}{|A|^n}\right)\cc
            = \left( \frac{|\{ \overline a\in A^n \mid A\models \phi(\overline a)|}{|A|^n}\right)\cc = (\SPc{\phi, A})\cc. \label{eq:Stone-pairing-F}
    \end{align}
In this sense, $\mu_n^A$ can be regarded as a $\Lh$-valued Stone pairing, relative to the fragment $\FO_n(\sigma)$.
Next, we show how to extend this to the full first-order logic $\FO(\sigma)$. First, we observe that the construction is invariant under extensions of the set of free variables (the proof is the same as in the classical case).
\begin{lemma}\label{l:meas-restr}
Given $m, n\in \N$ and $A\in \Fin(\sigma)$, if $m\geq n$ then $(\mu_{m}^A)_{\restriction\FO_n(\sigma)}=\mu_n^A$.
\end{lemma}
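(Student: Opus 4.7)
The plan is to reduce the statement to the classical Stone pairing via the explicit formula already established in equation \eqref{eq:Stone-pairing-F}. For any formula $\phi \in \FO_n(\sigma)$, regarded also as a formula in $\FO_m(\sigma)$ via the canonical inclusion induced by adjoining the dummy variables $v_{n+1},\dots,v_m$, equation \eqref{eq:Stone-pairing-F} applied twice yields
\[
\mu_n^A(\phi) = \big(\SPc{\phi, A}_n\big)\cc \qtq{and} \mu_m^A(\phi) = \big(\SPc{\phi, A}_m\big)\cc,
\]
where I temporarily decorate the classical Stone pairing with the arity used to compute it. Thus it suffices to verify that $\SPc{\phi, A}_n = \SPc{\phi, A}_m$, i.e.\ the classical invariance under extension of the variable set.

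This invariance is a direct counting argument: since $v_{n+1},\dots,v_m$ do not occur in $\phi$, we have $A\models\phi(b_1,\dots,b_m)$ iff $A\models\phi(b_1,\dots,b_n)$, and hence
\[
|\{\overline b\in A^m\mid A\models\phi(\overline b)\}| \ee= |A|^{m-n}\cdot |\{\overline a\in A^n\mid A\models\phi(\overline a)\}|.
\]
Dividing both sides by $|A|^m$ the factor $|A|^{m-n}$ cancels against $|A|^n$ in the denominator, giving
\[
\SPc{\phi,A}_m \ee= \frac{|\{\overline b\in A^m\mid A\models\phi(\overline b)\}|}{|A|^m} \ee= \frac{|\{\overline a\in A^n\mid A\models\phi(\overline a)\}|}{|A|^n} \ee= \SPc{\phi,A}_n,
\]
so applying $(\ARG)\cc$ yields the desired equality $\mu_m^A(\phi)=\mu_n^A(\phi)$.

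There is no real obstacle, as the authors already hint by writing ``the proof is the same as in the classical case''. The only point deserving care is the implicit identification of $\FO_n(\sigma)$ as a subalgebra of $\FO_m(\sigma)$ through the variable-inclusion map; once this identification is fixed, the reasoning is purely combinatorial and the passage from the $\ui$-valued to the $\Lh$-valued pairing is transparent because the values computed lie in the rational part of $\Lh$, i.e.\ in the image of $r\mapsto r\cc$, on which the two sides of the equation are literally the same.
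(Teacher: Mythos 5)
Your proof is correct and matches the paper's intent: the paper omits the argument entirely, remarking only that ``the proof is the same as in the classical case,'' and your reduction via equation \eqref{eq:Stone-pairing-F} to the standard counting identity $|\{\overline b\in A^m\mid A\models\phi(\overline b)\}| = |A|^{m-n}\cdot|\{\overline a\in A^n\mid A\models\phi(\overline a)\}|$ is exactly that classical argument, with the passage to $\Lh$ handled correctly since all values involved lie in the image of $r\mapsto r\cc$.
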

The Lindenbaum-Tarski algebra of all first-order formulas $\FO(\sigma)$ is the directed colimit  
of the Boolean subalgebras $\FO_n(\sigma)$, for $n\in \N$. Since the functor $\Mea$ turns directed colimits into codirected limits (Proposition~\ref{p:functor-mea}), the Priestley space $\Mea(\FO(\sigma))$ is the limit of the diagram
\[ \left\{
    \begin{tikzcd}[cramped, sep=2.5em]
        \Mea(\FO_n(\sigma)) & {\Mea(\FO_m(\sigma)) \mid m,n\in\N, \ m\geq n} \ar[swap, two heads]{l}{q_{n,m}}
    \end{tikzcd}
   \right\}
\]
where, for any $\mu\colon \FO_m(\sigma) \to \Lh$ in $\Mea(\FO_m(\sigma))$, the measure $q_{n,m}(\mu)$ is the restriction of $\mu$ to $\FO_n(\sigma)$. In view of Lemma~\ref{l:meas-restr}, for every $A\in\Fin(\sigma)$, the tuple $(\mu_n^A)_{n\in\N}$ is compatible with the restriction maps. Thus, recalling that limits in the category of Priestley spaces are computed as in sets, by universality of the limit construction, this tuple yields a measure
\[ \SP{\ARG, A} \colon \FO(\sigma) \to \Lh \]
in the space $\Mea(\FO(\sigma))$. This we call the \emph{$\Lh$-valued Stone pairing} associated with $A$.
As in the classical case, it is not difficult to see that the mapping $A \mapsto \SP{\ARG, A}$ gives an embedding $\SP{\ARG,\ARG}\colon \Fin(\sigma) \mono \Mea(\FO(\sigma))$. 
The following theorem illustrates the relation between the classical Stone pairing $\SPc{\ARG,\ARG}\colon \Fin(\sigma)\mono \Meac(\FO(\sigma))$, and the $\Lh$-valued one.
\begin{theorem}\label{t:embedding-triangle}
    The following diagram commutes:
    \begin{center}
        \begin{tikzcd}[row sep=small]
            {} & \Mea(\FO(\sigma))\ar[bend left=25]{dd}{\LG} \\
            \Fin(\sigma) \ar{ru}{\SP{\ARG,\ARG}}\ar[swap]{rd}{\SPc{\ARG,\ARG}} & \\
            & \Meac(\FO(\sigma))\ar[bend left=25]{uu}{\LC}
        \end{tikzcd}
    \end{center}
\end{theorem}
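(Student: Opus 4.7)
The plan is to verify that both composites agree with the respective Stone pairings pointwise, i.e.\ on each formula $\phi$ of $\FO(\sigma)$, by reducing to the finite-variable fragments $\FO_n(\sigma)$ where the identity \eqref{eq:Stone-pairing-F} is available. Since $\FO(\sigma)$ is the directed colimit of the $\FO_n(\sigma)$ and the $\Lh$-valued Stone pairing $\SP{\ARG,A}$ was defined precisely as the measure induced by the compatible family $(\mu_n^A)_{n\in\N}$, any $\phi\in \FO(\sigma)$ lies in some $\FO_n(\sigma)$ and $\SP{\phi, A}=\mu_n^A(\phi)$. The analogous statement holds on the classical side: $\SPc{\phi,A}$ depends only on $\phi$ regarded as an element of $\FO_n(\sigma)$ for $n$ large enough. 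Thus it is enough to argue at the level of $\FO_n(\sigma)$.

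For the upper triangle, I would fix $A\in\Fin(\sigma)$ and $\phi\in\FO_n(\sigma)$ and compute
\[
\LG(\SP{\ARG, A})(\phi)\ =\ \gamma(\mu_n^A(\phi))\ =\ \gamma\bigl((\SPc{\phi,A})\cc\bigr)\ =\ \SPc{\phi,A},
\]
where the middle equality is \eqref{eq:Stone-pairing-F} and the last is the definition of $\gamma$ from \eqref{eq:gamma}. Hence $\LG\cdot\SP{\ARG,\ARG}=\SPc{\ARG,\ARG}$.

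For the lower triangle, the key observation is that $\SPc{\phi,A}$ is always a rational number, being the ratio $|\{\overline a\in A^n\mid A\models\phi(\overline a)\}|/|A|^n$ of integers. By the definition of $\iota$ in \eqref{eq:ic}, $\ic$ sends any rational $r\in\ui$ to $r\cc$. Therefore, for $\phi\in\FO_n(\sigma)$,
\[
\LC(\SPc{\ARG,A})(\phi)\ =\ \ic(\SPc{\phi,A})\ =\ (\SPc{\phi,A})\cc\ =\ \mu_n^A(\phi)\ =\ \SP{\phi,A},
\]
using \eqref{eq:Stone-pairing-F} once more. Hence $\LC\cdot\SPc{\ARG,\ARG}=\SP{\ARG,\ARG}$, completing the verification of both triangles.

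The only mild subtlety, and the thing I would double-check, is the compatibility of the reduction to $\FO_n(\sigma)$: I need Lemma~\ref{l:meas-restr} together with the universal property defining $\SP{\ARG,A}$ so that evaluating at $\phi\in\FO_n(\sigma)\sue\FO(\sigma)$ really does produce $\mu_n^A(\phi)$, and similarly that $\SPc{\phi,A}$ is well defined independently of the choice of $n$ with $\phi\in\FO_n(\sigma)$. Beyond this bookkeeping, no deeper obstacle arises: commutativity of the diagram is essentially a restatement of \eqref{eq:Stone-pairing-F} together with the fact that $\gamma$ and $\ic$ are mutually inverse on the rationals, which is precisely the range of values taken by $\SPc{\ARG,A}$ on finite structures.
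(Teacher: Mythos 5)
Your proof is correct and follows essentially the same route as the paper's: both reduce to a formula $\phi\in\FO_n(\sigma)$, invoke $\SP{\phi,A}=\mu_n^A(\phi)=(\SPc{\phi,A})\cc$ from equation \eqref{eq:Stone-pairing-F}, and conclude from the definitions of $\gamma$ and $\ic$. You merely spell out what the paper leaves as ``follows at once,'' including the (correct and worth noting) observation that $\SPc{\phi,A}$ is rational, which is exactly why $\ic$ sends it to $(\SPc{\phi,A})\cc$ rather than $(\SPc{\phi,A})\mm$.
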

\begin{proof}
    Fix an arbitrary finite structure $A\in\Fin(\sigma)$. Let $\phi$ be a formula in $\FO(\sigma)$ with free variables among $\{v_1, \dots, v_n\}$, for some $n\in \N$. By construction, $\SP{\phi,A}=\mu_n^A(\phi)$. Therefore, by equation \eqref{eq:Stone-pairing-F}, $\SP{\phi,A}=(\SPc{\phi, A})\cc$. The statement then follows at once.\qed
\end{proof}
\begin{remark*}
The construction in this section works also for proper fragments, i.e.\ for sublattices $D\subseteq \FO(\sigma)$. 
This corresponds to composing the embedding $\Fin(\sigma)\mono\Mea(\FO(\sigma))$ with the restriction map $\Mea(\FO(\sigma)) \to \Mea(D)$ sending $\mu\colon \FO(\sigma)\to\Lh$ to $\mu_{\restriction D}\colon D\to\Lh$. The only difference is that the ensuing map $\Fin(\sigma)\to \Mea(D)$ need not be injective, in general.
\end{remark*}

\subsection{Limits in the spaces of measures}\label{s:limits-finite-structures}
By Theorem \ref{t:embedding-triangle} the $\Lh$-valued Stone pairing $\SP{\ARG,\ARG}$ and the classical Stone pairing $\SPc{\ARG,\ARG}$ determine each other. However, the notions of convergence associated with the spaces $\Mea(\FO(\sigma))$ and $\Meac(\FO(\sigma))$ are different: since the topology of $\Mea(\FO(\sigma))$ is richer, there are ``fewer'' convergent sequences. 
Recall from Lemma~\ref{l:gamma-continuous-monotone} that $\LG\colon \Mea(\FO(\sigma)) \to \Meac(\FO(\sigma))$ is continuous. Also, $\LG (\SP{\ARG, A}) = \SPc{\ARG, A}$ by Theorem~\ref{t:embedding-triangle}. Thus, for any sequence of finite structures $(A_n)_{n\in\N}$, if
\begin{center}
    $\SP{\ARG, A_n}$ converges to a measure $\mu$ in $\Mea(\FO(\sigma))$
\end{center}
then
\begin{center}
    $\SPc{\ARG, A_n}$ converges to the measure $\LG(\mu)$ in $\Meac(\FO(\sigma))$.
\end{center}

The converse is not true. For example, consider the signature $\sigma=\{<\}$ consisting of a single binary relation symbol, and let $(A_n)_{n\in\N}$ be the sequence of finite posets displayed in the picture below.
\begin{center}
\begin{tikzpicture}[xscale=0.80,yscale=0.40]
\tikzset{mynode1/.style={draw,circle,fill=black, inner sep=1.2pt,outer sep=0pt}}
    \node [mynode1, label={[yshift=-0.8cm]$A_1$}] (a1) at (0,0) {};
    \node [mynode1] (a2) at (0,1) {};

    \node [mynode1, label={[yshift=-0.8cm]$A_2$}] (b1) at (1,0) {};
    \node [mynode1] (b2) at (1,1) {};
    \node [mynode1] (b3) at (1,2) {};

    \node [mynode1, label={[yshift=-0.8cm]$A_3$}] (c1) at (2,0) {};
    \node [mynode1] (c2) at (2,1) {};
    \node [mynode1] (c3) at (2,2) {};

    \node [mynode1, label={[yshift=-0.8cm]$A_4$}] (d1) at (3,0) {};
    \node [mynode1] (d2) at (3,1) {};
    \node [mynode1] (d3) at (3,2) {};
    \node [mynode1] (d4) at (3,3) {};

    \node [mynode1, label={[yshift=-0.8cm]$A_5$}] (e1) at (4,0) {};
    \node [mynode1] (e2) at (4,1) {};
    \node [mynode1] (e3) at (4,2) {};
    \node [mynode1] (e4) at (4,3) {};

    \node [mynode1, label={[yshift=-0.8cm]$A_6$}] (f1) at (5,0) {};
    \node [mynode1] (f2) at (5,1) {};
    \node [mynode1] (f3) at (5,2) {};
    \node [mynode1] (f4) at (5,3) {};
    \node [mynode1] (f5) at (5,4) {};

    \node [label={[yshift=-0.8cm]$\cdots$}] (g1) at (6.2,0) {};

    \draw [thick] (a1) -- (a2) (b1) -- (b2) (c1) -- (c2) -- (c3) 
    (d1) -- (d2) -- (d3) (e1) -- (e2) -- (e3) -- (e4)
    (f1) -- (f2) -- (f3) -- (f4); 
\end{tikzpicture}
\end{center}
Let $\psi(x)\approx\forall y\,\neg(x<y) \wedge \exists z\,\neg(z < x)\mee \neg(z = x)$ be the formula stating that $x$ is maximal but not the maximum in the order given by $<$.
Then, for the sublattice $D=\{\false, \psi,\true\}$ of $\FO(\sigma)$, the sequences $\SP{\ARG, A_n}$ and $\SPc{\ARG, A_n}$ converge in $\Mea(D)$ and $\Meac(D)$, respectively. However, if we consider the Boolean algebra $B=\{\false, \psi,\neg\psi,\true\}$, then the $\SPc{\ARG, A_n}$'s still converge whereas the $\SP{\ARG, A_n}$'s do not. Indeed, the following sequence does not converge in $\Lh$: 
\[
(\SP{\neg\psi, A_n})_n=(1\cc, (\tfrac{1}{3})\cc, 1\cc, (\tfrac{2}{4})\cc, 1\cc, (\tfrac{3}{5})\cc,\ldots).
\]

Next, identify $\Fin(\sigma)$ with a subset of $\Mea(\FO(\sigma))$ (resp.\ $\Meac(\FO(\sigma))$) through $\SP{\ARG,\ARG}$ (resp.\ $\SPc{\ARG,\ARG}$).
A central question in the theory of structural limits, cf.\ \cite{nevsetvril2016first}, is to determine the closure of $\Fin(\sigma)$ in $\Meac(\FO(\sigma))$, which consists precisely of the limits of sequences of finite structures.  The following theorem gives an answer to this question in terms of the corresponding question for $\Mea(\FO(\sigma))$.
\begin{theorem}\label{t:closure-f-Fin}
  Let $\overline{\Fin(\sigma)}$ denote the closure of $\Fin(\sigma)$ in $\Mea(\FO(\sigma))$. Then the set $\LG(\overline{\Fin(\sigma)})$ coincides with the closure of $\Fin(\sigma)$ in $\Meac(\FO(\sigma))$.
\end{theorem}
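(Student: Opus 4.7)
The plan is to prove both inclusions $\LG(\overline{\Fin(\sigma)}) \subseteq \overline{\Fin(\sigma)}^I$ and $\overline{\Fin(\sigma)}^I \subseteq \LG(\overline{\Fin(\sigma)})$, where I write $\overline{\Fin(\sigma)}^I$ for the closure in $\Meac(\FO(\sigma))$. The key ingredients are already in place: by Theorem~\ref{t:embedding-triangle}, the map $\LG$ sends the copy of $\Fin(\sigma)$ sitting inside $\Mea(\FO(\sigma))$ (via $\SP{\ARG,\ARG}$) exactly onto the copy of $\Fin(\sigma)$ sitting inside $\Meac(\FO(\sigma))$ (via $\SPc{\ARG,\ARG}$); moreover, $\LG$ is continuous by Lemma~\ref{l:gamma-continuous-monotone}. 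So one simply has to combine this with standard topological facts about compact Hausdorff spaces.

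For the forward inclusion, I would use that a continuous map always sends the closure of a set into the closure of its image: since $\LG(\Fin(\sigma))=\Fin(\sigma)$ inside $\Meac(\FO(\sigma))$, continuity of $\LG$ gives $\LG(\overline{\Fin(\sigma)}) \subseteq \overline{\LG(\Fin(\sigma))}^I = \overline{\Fin(\sigma)}^I$.

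For the reverse inclusion, I would argue that $\LG(\overline{\Fin(\sigma)})$ is itself closed in $\Meac(\FO(\sigma))$, and contains $\Fin(\sigma)$, hence contains $\overline{\Fin(\sigma)}^I$ by minimality of the closure. Closedness follows by compactness: $\Mea(\FO(\sigma))$ is a Priestley (hence compact) space by Proposition~\ref{p:measures-Priestley}, so $\overline{\Fin(\sigma)}$ is compact; $\LG$ is continuous, so its image $\LG(\overline{\Fin(\sigma)})$ is compact in $\Meac(\FO(\sigma))$; and the latter is a compact ordered (in particular Hausdorff) space by Corollary~\ref{c:class-meas-compact-pospace}, so compact subsets are closed. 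Containment of $\Fin(\sigma)$ in $\LG(\overline{\Fin(\sigma)})$ is immediate from $\Fin(\sigma) \subseteq \overline{\Fin(\sigma)}$ together with $\LG \cdot \SP{\ARG,\ARG} = \SPc{\ARG,\ARG}$ from Theorem~\ref{t:embedding-triangle}.

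There is no real obstacle here: the statement is essentially a formal consequence of continuity plus compact-to-Hausdorff. The only subtle point to flag is making sure the two copies of $\Fin(\sigma)$ (one in $\Mea$, one in $\Meac$) are correctly identified under $\LG$, which is precisely what Theorem~\ref{t:embedding-triangle} provides. I would write the proof as two short paragraphs mirroring the two inclusions above.
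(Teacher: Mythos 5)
Your proposal is correct and follows essentially the same route as the paper: the paper's proof also combines Theorem~\ref{t:embedding-triangle} ($\LG$ maps the one copy of $\Fin(\sigma)$ onto the other), continuity of $\LG$ (Lemma~\ref{l:gamma-continuous-monotone}), and compactness of the two spaces (Proposition~\ref{p:measures-Priestley}, Corollary~\ref{c:class-meas-compact-pospace}), merely packaging your two inclusions into the single observation that a continuous map between compact Hausdorff spaces is closed, hence commutes with closures.
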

\begin{proof}
    Write $U$ for the image of $\SP{\ARG,\ARG}\colon \Fin(\sigma) \mono \Mea(\FO(\sigma))$, and $V$ for the image of $\SPc{\ARG,\ARG}\colon \Fin(\sigma) \mono \Meac(\FO(\sigma))$. 
    We must prove that $\LG(\overline{U})=\overline{V}$. By Theorem \ref{t:embedding-triangle}, $\LG(U)=V$. The map $\LG\colon \Mea(\FO(\sigma))\to\Meac(\FO(\sigma))$ is continuous (Lemma \ref{l:gamma-continuous-monotone}), and the spaces $\Mea(\FO(\sigma))$ and $\Meac(\FO(\sigma))$ are compact Hausdorff (Proposition \ref{p:measures-Priestley} and Corollary \ref{c:class-meas-compact-pospace}). Since continuous maps between compact Hausdorff spaces are closed, $\LG(\overline{U})=\overline{\LG(U)}=\overline{V}$.
    \qed
\end{proof}

\section{The logic of measures}\label{s:logic-of-measures}
Let $D$ be a distributive lattice. We know from Proposition~\ref{p:measures-Priestley} that the space $\Mea(D)$ of $\Lh$-valued measures on $D$ is a Priestley space, whence it has a dual distributive lattice $\PFG(D)$. In this section we show that $\PFG(D)$ can be represented as the Lindenbaum-Tarski algebra for a propositional logic $\PL D$ obtained from $D$ by adding probabilistic quantifiers. 
Since we adopt a logical perspective, we write $\false$ and $\true$ for the bottom and top elements of $D$, respectively.

The set of propositional variables of $\PL D$ consists of the symbols $\PrG p a$, for every $a\in D$ and $p\in\Q\cap \ui$. For every measure $\mu\in\Mea(D)$, we set
\begin{equation}\label{eq:semantic-meas}
\mu\models \PrG p a \ \Leftrightarrow \ \mu(a)\geq p\cc.
\end{equation}
This satisfaction relation extends in the obvious way to the closure under finite conjunctions and finite disjunctions of the set of propositional variables. Define
\[ \phi \models \psi \qtq{if,} \forall \mu\in\Mea(D),\ \ \mu\models \phi \text{ implies }\mu\models \psi.\]
Also, write $\models \phi$ if $\mu\models \phi$ for every $\mu\in\Mea(D)$, and $\phi \models$ if there is no $\mu\in\Mea(D)$ with $\mu\models \phi$.

\smallskip
Consider the following conditions, for any $p,q,r\in \Q\cap \ui$ and $a,b\in D$.

\medskip
\quad
\begin{minipage}{0.91\textwidth}
    \begin{enumerate}[label=(L\arabic*)]
        \item $\PrG q a \models \PrG p a$ whenever $p \leq q$
        \item $\PrG p \false\models{}$ whenever $p>0$, ${}\models \PrG 0 \false$ and ${}\models \PrG q \true$
        \item $\PrG q a \models \PrG q b$ whenever $a\leq b$
        \item $\PrG p a\,\wedge\, \PrG q b \models \PrG{p+q-r}(a\vee b) \,\vee\, \PrG{r}(a\wedge b)$ whenever $0\leq p+q-r\leq 1$
        \item $\PrG{p+q-r}(a\vee b)\,\wedge\,\PrG{r}(a\wedge b) \models \PrG p a \,\vee\, \PrG q b$ whenever $0\leq p+q-r\leq 1$
    \end{enumerate}
\end{minipage}
\medskip

\noindent
It is not hard to see that the interpretation in \eqref{eq:semantic-meas} validates these conditions:

    \begin{lemma}\label{l:soundness-L1-L5}
    The conditions (L1)--(L5) are satisfied in $\Mea(D)$.
    \end{lemma}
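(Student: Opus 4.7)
My plan is to verify each of (L1)--(L5) in turn under the semantic clause $\mu \models \PrG p a \iff \mu(a) \geq p\cc$, reading the required implications directly off the order and partial operations of $\Lh$. Conditions (L1)--(L3) are immediate: (L1) uses that $p \leq q$ in $\Q \cap \ui$ implies $p\cc \leq q\cc$ in $\Lh$; (L2) uses that $\mu(\false) = 0\cc$ is the bottom of $\Lh$ and $\mu(\true) = 1\cc$ lies above every $q\cc$ with $q \in \Q \cap \ui$; and (L3) is verbatim the monotonicity clause of Definition~\ref{d:measure}. The non-trivial content is in (L4)--(L5), which require the third clause of Definition~\ref{d:measure}.

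I would prove (L4) by contrapositive. Assume $\mu(a) \geq p\cc$, $\mu(b) \geq q\cc$ and $\mu(a \wedge b) < r\cc$; the goal is $\mu(a \vee b) \geq (p+q-r)\cc$. The first step is to pass to the classical measure $m = \gamma \cdot \mu \in \Meac(D)$ provided by Lemma~\ref{l:measure-maps}. Since $\mu(a \wedge b) < r\cc$ forces $m(a \wedge b) \leq r$, classical finite additivity gives $m(a \vee b) \geq p+q-r$. If this inequality is strict, then $\gamma(\mu(a \vee b)) > p+q-r$ automatically places $\mu(a \vee b)$ strictly above $(p+q-r)\mm$ in the chain $\Lh$, whence $\mu(a \vee b) \geq (p+q-r)\cc$ as required.

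The substantive case is the boundary equality $m(a \vee b) = p+q-r$. Here $m(a) + m(b) = p+q$ pins down $m(a) = p$, $m(b) = q$, $m(a \wedge b) = r$, and combining these with the hypotheses and the fact that $\gamma\inv(p) \cap \{x \mid x \geq p\cc\} = \{p\cc\}$ (and similarly for $b$ and $a \wedge b$) forces $\mu(a) = p\cc$, $\mu(b) = q\cc$, $\mu(a \wedge b) = r\mm$. Supposing for contradiction $\mu(a \vee b) = (p+q-r)\mm$, the monotonicity constraints $r\mm \leq p\cc$ and $q\cc \leq (p+q-r)\mm$ together force $r < p$; a direct computation using the explicit formulas for $\mip$ and the definition~\eqref{eq:def-of-miss} of $\miss$ then reduces the first measure inequality $\mu(a) \miss \mu(a \wedge b) \leq \mu(a \vee b) \mip \mu(b)$ to $(p-r)\cc \leq (p-r)\mm$, contradicting the cover relation $(p-r)\mm < (p-r)\cc$ in $\Lh$.

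Condition (L5) would follow by a dual argument using the second measure inequality $\mu(a) \mip \mu(a \wedge b) \geq \mu(a \vee b) \miss \mu(b)$, with the same pattern of reducing to a classical bound and then ruling out the remaining minus-vs-circle ambiguity on $\mu(a \vee b)$ and $\mu(a \wedge b)$ via the partial operations. The main obstacle I anticipate is the bookkeeping of the degenerate sub-cases, namely $r = 0$, $p+q-r = 0$, or $r$ coinciding with $p$ or $q$, in which the partial operations on $\Lh$ may be undefined or the target inequality trivialises. These will be handled by direct monotonicity arguments together with the observation that $\PrG 0 a$ is vacuously valid for every $a \in D$.
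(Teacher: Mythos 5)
The paper states this lemma without proof (it is introduced with ``it is not hard to see''), so there is no official argument to compare against; judged on its own terms, your proof is correct and complete. (L1)--(L3) are indeed immediate from the order of $\Lh$ and the first two clauses of Definition~\ref{d:measure}. For (L4), your reduction via $m=\gamma\cdot\mu$ (Lemma~\ref{l:measure-maps}) is a clean way to isolate the only non-trivial situation: classical additivity gives $m(a\vee b)\geq p+q-r$, the strict case is settled by the covering relation in the chain $\Lh$, and in the boundary case the values are pinned to $\mu(a)=p\cc$, $\mu(b)=q\cc$, $\mu(a\wedge b)=r\mm$. The final contradiction checks out: monotonicity forces $r<p$, and then $\mu(a)\miss\mu(a\wedge b)=p\cc\miss r\mm=\bigvee\{(p-q')\cc\mid r\leq q'\leq p\}=(p-r)\cc$ while $\mu(a\vee b)\mip\mu(b)=(p+q-r)\mm\mip q\cc=(p-r)\mm$, so the first inequality of Definition~\ref{d:measure}(3) would read $(p-r)\cc\leq(p-r)\mm$, which is false. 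The dual computation for (L5) (giving $(p-r)\mm\geq(p-r)\cc$ from the second inequality) works the same way, and you correctly flag the degenerate cases $r=0$, $p+q-r=0$, $p=0$, $q=0$, where either the relevant element $s\mm$ does not exist or the conclusion holds trivially because $\PrG 0 c$ is valid for every $c\in D$. A direct argument avoiding $\gamma$ is possible but requires fussier domain bookkeeping for $\mip$ and $\miss$; your detour through the classical measure is arguably the tidier route.
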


    Write $\PFG(D)$ for the quotient of the free distributive lattice on the set 
    \[
    \{\PrG p a\mid p\in\Q\cap \ui, \ a\in D\}
    \]
    with respect to the congruence generated by the conditions (L1)--(L5).

\begin{proposition}\label{l:homo-to-measure}
    Let $\Pf\sue \PFG(D)$ be a prime filter. The assignment
    \[ 
    a\mapsto \bigvee{\{q\cc \mid \PrG q a \in \Pf\}} \quad \text{ defines a measure } \ \mu_{\Pf}\colon D \to \Lh.
    \]
\end{proposition}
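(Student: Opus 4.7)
My plan is to verify, in order, the three defining conditions of Definition~\ref{d:measure} for the assignment $\mu_F(a) := \bigvee\{q\cc \mid \PrG q a \in F\}$: (1) well-definedness and the boundary values $\mu_F(\false) = 0\cc$, $\mu_F(\true) = 1\cc$; (2) monotonicity; and (3) the two finite additivity inequalities, which constitute the technical heart of the proof.

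For (1) and (2), the argument is short. Axioms (L2) and (L3) together imply $\PrG 0 a \in F$ for every $a \in D$ (since $\PrG 0 \false$ is a tautology and $\false \leq a$), so $S_a := \{q \in \Q \cap \ui \mid \PrG q a \in F\}$ is non-empty, and (L1) makes it downward-closed in $\Q \cap \ui$. The supremum $\bigvee\{q\cc \mid q \in S_a\}$ therefore exists in the chain $\Lh$ — explicitly, it equals $s\cc$ when $s = \max S_a$ exists, and $s\mm$ with $s = \sup S_a$ otherwise. The boundary conditions follow from (L2) (using properness of $F$ for $\mu_F(\false) = 0\cc$), and monotonicity $a \leq b \Rightarrow \mu_F(a) \leq \mu_F(b)$ is immediate from (L3) via $S_a \subseteq S_b$.

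Writing $\alpha, \beta, \gamma, \delta$ for $\mu_F(a), \mu_F(b), \mu_F(a \vee b), \mu_F(a \wedge b)$, the remaining task is to show $\alpha \miss \delta \leq \gamma \mip \beta$ and $\alpha \mip \delta \geq \gamma \miss \beta$. My approach is to first establish the classical version $g(\alpha) + g(\beta) = g(\gamma) + g(\delta)$, where $g: \Lh \to \ui$ is the retraction from Section~\ref{s:retraction-Lh-ui}. The ``$\geq$'' direction uses (L4): for rationals $p < g(\alpha), q < g(\beta)$ and $r \in (g(\delta), \min(1, p+q)]$ within the valid range of (L4), the axiom forces $p+q-r \leq g(\gamma)$; letting $r \to g(\delta)^+$ and supping over $(p,q)$ gives the inequality. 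The ``$\leq$'' direction is symmetric via (L5) and primeness of $F$, taking $p > g(\alpha), q > g(\beta)$. Boundary cases (e.g.\ $g(\gamma) = 1$) are handled by selecting $r$ at the endpoints of its admissible range.

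To lift to the $\Lh$-level inequality $\alpha \miss \delta \leq \gamma \mip \beta$, I unfold \eqref{eq:def-of-miss} and reduce to showing $\alpha \mip q\cc \leq \gamma \mip \beta$ for each rational $q$ with $\delta < q\cc \leq \alpha$. Via the adjunction $\blank + y \dashv \blank \mip y$, this further reduces to: for each rational $p$ with $(p+q)\cc \leq \alpha$ (i.e.\ $\PrG{p+q} a \in F$), the sum $p\cc + \beta$ is defined and bounded by $\gamma$. Instantiating (L4) at $(p+q, e, q)$ for each $e \in S_b$, together with $\PrG q (a \wedge b) \notin F$ from $\delta < q\cc$, yields $\PrG{p+e}(a \vee b) \in F$, i.e.\ $(p+e)\cc \leq \gamma$; classical additivity ensures $p + g(\beta) \leq 1$, so $p\cc + \beta$ is defined and, since $\blank + \beta$ is a left adjoint hence join-preserving, equals $\sup\{(p+e)\cc \mid e \in S_b\} \leq \gamma$. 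The second inequality $\alpha \mip \delta \geq \gamma \miss \beta$ is obtained by an analogous argument using (L5). The hardest step I expect is the careful treatment of boundary behavior in establishing classical additivity: when $g(\alpha)$ or $g(\beta)$ approaches $1$, the range of $r$ in (L4) and (L5) shrinks, so verifying the equality at the boundary requires a coordinated use of both axioms.
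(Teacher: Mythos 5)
Your overall architecture matches the paper's: conditions (1) and (2) via (L1)--(L3) exactly as in the paper, and for condition (3) the same engine, namely an instance of (L4) plus primeness of $\Pf$ turning ``$\PrG{p}a,\PrG{q}b\in\Pf$ and $\PrG{r}(a\wedge b)\notin\Pf$'' into ``$\PrG{p+q-r}(a\vee b)\in\Pf$''. The paper organises the sup/inf bookkeeping through the identity $\mu_{\Pf}(a)\mip r\cc=\bigvee\{p\cc\mip r\cc\mid r\cc\leq p\cc\leq\mu_{\Pf}(a)\}$ and the fact that $x\mip\blank$ turns joins into meets, whereas you route it through the adjunction $\blank+y\dashv\blank\mip y$ and a preliminary ``classical additivity'' lemma for $\gamma\cdot\mu_{\Pf}$; that repackaging is unobjectionable in itself.

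The difficulty is the step you yourself flagged as the hardest, and it is not merely a matter of care: it cannot be closed from (L1)--(L5) as stated. The proviso $0\leq p+q-r\leq 1$ removes from the generating set precisely the instances of (L4) needed when $p+q-r$ would overflow $1$. Concretely, take $D=\{\false,a,b,\true\}$ with $a\wedge b=\false$ and $a\vee b=\true$, and the $\{0,1\}$-assignment sending $\PrG{p}c$ to $1$ iff $c\neq\false$ or $p=0$. One checks that this validates every generating inequality --- for the critical pair $(a,b)$ in (L4), the proviso forces $r\geq p+q-1$, so the disjunct $\PrG{p+q-r}(a\vee b)$ is already true --- hence it determines a prime filter $\Pf$ with $\mu_{\Pf}(a)=\mu_{\Pf}(b)=\mu_{\Pf}(a\vee b)=1\cc$ and $\mu_{\Pf}(a\wedge b)=0\cc$. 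Your intermediate lemma would then read $1+1=1+0$, and the target inequality itself fails: $\mu_{\Pf}(a)\miss\mu_{\Pf}(a\wedge b)=1\mm\not\leq 0\cc=\mu_{\Pf}(a\vee b)\mip\mu_{\Pf}(b)$. So ``selecting $r$ at the endpoint of its admissible range'' only yields $\gamma(\mu_{\Pf}(a\vee b))=1$, never the needed $\gamma(\mu_{\Pf}(a\vee b))+\gamma(\mu_{\Pf}(a\wedge b))\geq p+q$, and no coordination of (L4) with (L5) can supply the missing instances. Closing the gap requires strengthening the axioms with the overflow cases, e.g.\ $\PrG{p}a\wedge\PrG{q}b\models\PrG{r}(a\wedge b)$ whenever $p+q-r>1$. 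For what it is worth, the paper's own proof has the same blind spot --- it applies (L4) without checking $p+q-r\leq 1$ --- so your attempt is faithful to it; but as written, the boundary step of your plan would fail rather than merely be delicate.
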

\begin{proof}
    Items (L2) and (L3) take care of the first two conditions defining $\Lh$-valued measures (cf.\ Definition~\ref{d:measure}). We prove the first half of the third condition, as the other half is proved in a similar fashion. We must show that, for every $a,b\in D$,  
\begin{align}
    \label{eq:mu-h-1}
    \mu_{\Pf}(a)\miss \mu_{\Pf}(a\wedge b)&\leq \mu_{\Pf}(a\vee b)\mip \mu_{\Pf}(b).
\end{align}
    It is not hard to show that $\mu_{\Pf}(a) \mip r\cc = \bigvee \{ p_1\cc \mip r\cc \mid r\cc \leq p_1\cc \leq \mu_{\Pf}(a) \}$, and $x\mip \blank$ transforms non-empty joins into meets (this follows by Scott continuity of $x\mip\blank$ seen as a map $[0\cc,x]\to \Lh^\op$).
    Hence, equation \eqref{eq:mu-h-1} is equivalent to
    \begin{equation*}
    \bigvee{\{p\cc\mip r\cc\mid \mu_{\Pf}(a\wedge b)<r\cc\leq p\cc\leq\mu_{\Pf}(a)\}}\leq \bigwedge{\{\mu_{\Pf}(a\vee b)\mip q\cc\mid q\cc \leq \mu_{\Pf}(b)\}}.
    \end{equation*}
To settle this inequality it is enough to show that, provided $\mu_{\Pf}(a\wedge b)<r\cc\leq p\cc\leq \mu_{\Pf}(a)$ and $q\cc \leq \mu_{\Pf}(b)$, we have $(p-r)\cc \leq \mu_{\Pf}(a\vee b)\mip q\cc$.
The latter inequality is equivalent to $(p+q-r)\cc\leq \mu_{\Pf}(a\vee b)$.  
In turn, using (L4) and the fact that $\Pf$ is a prime filter, $\PrG p a,\PrG q b\in \Pf$ and $\PrG r (a\wedge b)\notin \Pf$ entail $\PrG{p+q-r}(a\vee b)\in \Pf$. Whence,
\[
\mu_{\Pf}(a\vee b)=\bigvee{\{s\cc \mid \PrG s (a\vee b)\in \Pf\}}\geq (p+q-r)\cc. \qedhere
\]
\end{proof}

We can now describe the dual lattice of $\Mea(D)$ as the Lindenbaum-Tarski algebra for the logic $\PL D$, built from the propositional variables $\PrG p a$ by imposing the laws (L1)--(L5). 
\begin{theorem}\label{t:PFG(D)-dual-to-meas}
   Let $D$ be a distributive lattice. Then the lattice $\PFG(D)$ is isomorphic to the distributive lattice dual to the Priestley space $\Mea(D)$.
\end{theorem}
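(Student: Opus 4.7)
The approach is to establish the isomorphism via Priestley duality by exhibiting an explicit Priestley isomorphism between $\Mea(D)$ and the dual space $X_{\PFG(D)}$ of $\PFG(D)$. Define
\[
\Phi\colon \Mea(D) \to X_{\PFG(D)}, \quad \mu \mapsto F_\mu = \{\phi \in \PFG(D) \mid \mu \models \phi\},
\]
where $\models$ is the canonical extension of the satisfaction relation \eqref{eq:semantic-meas} to finite conjunctions and disjunctions. First, I would verify that $F_\mu$ is well-defined on the quotient $\PFG(D)$, which is exactly the soundness statement of Lemma~\ref{l:soundness-L1-L5}; the prime filter properties (upward closure, binary meet-closure, primality, and properness) then follow from the inductive definition of $\models$, together with the observation that no semantically inconsistent formula belongs to $F_\mu$. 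Proposition~\ref{l:homo-to-measure} already supplies a candidate inverse $\Psi\colon X_{\PFG(D)} \to \Mea(D)$, $\Pf \mapsto \mu_\Pf$.

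Next I would show that $\Phi$ and $\Psi$ are mutually inverse. For $\Psi \cdot \Phi = \mathrm{id}$, one computes for each $a \in D$
\[
\mu_{F_\mu}(a) \ =\ \bigvee\{q\cc \mid \PrG q a \in F_\mu\} \ =\ \bigvee\{q\cc \mid q \in \Q\cap\ui, \ \mu(a) \geq q\cc\},
\]
and this join equals $\mu(a)$ by a density property of $\Lh$: every $x \in \Lh$ satisfies $x = \bigvee\{q\cc \mid q\in\Q\cap\ui, \ q\cc \leq x\}$, verified by case analysis depending on whether $x = r\cc$, or $x = r\mm$ with $r$ rational, or $x = r\mm$ with $r$ irrational. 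For $\Phi \cdot \Psi = \mathrm{id}$, since $\PFG(D)$ is generated by the $\PrG p a$, it suffices to check agreement on generators, i.e.\ $\PrG p a \in \Pf$ iff $\mu_\Pf(a) \geq p\cc$. One direction is immediate from the definition of $\mu_\Pf$; the other uses (L1) to infer that if $\PrG p a \notin \Pf$, then $\PrG q a \notin \Pf$ for every rational $q \geq p$, forcing $\mu_\Pf(a) \leq p\mm < p\cc$.

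Thirdly, I would check that $\Phi$ is a Priestley morphism. Continuity follows since the topology of $X_{\PFG(D)}$ is generated by the subbasic clopens $\sem[\PrG p a]$ and their complements, and
\[
\Phi\inv(\sem[\PrG p a]) \ =\ \{\mu \in \Mea(D) \mid \mu(a) \geq p\cc\} \ =\ \MBU{a}{p},
\]
which is clopen in $\Mea(D)$ by the description of its topology given in the proof of Proposition~\ref{p:functor-mea}. Monotonicity and order-reflection both reduce to the statement that $\mu(a) \leq \nu(a)$ in $\Lh$ for all $a \in D$ iff $F_\mu \subseteq F_\nu$, whose nontrivial direction again invokes the density property above. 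Since $\Mea(D)$ is compact (Proposition~\ref{p:measures-Priestley}) and $X_{\PFG(D)}$ is Hausdorff, a continuous order-isomorphism is automatically a homeomorphism, hence a Priestley isomorphism. Applying Priestley duality then identifies $\PFG(D)$ with the dual lattice of $\Mea(D)$.

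The main obstacle is the density lemma $x = \bigvee\{q\cc \mid q\cc \leq x\}$ in $\Lh$, which underpins both the inverse-check $\Psi\cdot\Phi=\mathrm{id}$ and the order-reflection property. It relies on the specific doubled-rationals structure of $\Lh$ from Section~\ref{s:Lh}, in particular on the fact that $q\cc$ covers $q\mm$ for every rational $q$ and that the irrational $r\mm$'s are suprema of their rational strict lower approximations. A secondary care point is verifying properness of $F_\mu$ and its compatibility with the bottom of $\PFG(D)$, which amounts to the soundness of $\PrG p \false \models{}$ for $p > 0$ from (L2).
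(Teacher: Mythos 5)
Your proposal is correct and follows essentially the same route as the paper: the paper works with the single map $\theta\colon X_{\PFG(D)}\to\Mea(D)$, $\Pf\mapsto\mu_\Pf$ from Proposition~\ref{l:homo-to-measure} (your $\Psi$), proves it is an order-embedding and surjective via exactly your filter $F_\mu=\Pf_\mu$, and checks continuity on the subbasic clopens $\MBU{a}{p}$ and $\MBD{a}{q}$ before invoking compactness. Your density lemma $x=\bigvee\{q\cc\mid q\cc\leq x\}$ is the same approximation fact the paper uses implicitly when writing $\mu_{\Pf}(a)$ as a join of rational lower bounds (and dually as a meet of the $p\mm$), so the two write-ups differ only in which direction of the bijection carries the continuity check.
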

\begin{proof}
Let $X_{\PFG(D)}$ be the space dual to $\PFG(D)$. By Proposition \ref{l:homo-to-measure} there is a map $\theta\colon X_{\PFG(D)}\to \Mea(D)$, $\Pf\mapsto \mu_{\Pf}$. We claim that $\theta$ is an isomorphism of Priestley space.     
Clearly, $\theta$ is monotone. If $\mu_{\Pf_1}(a)\not\leq \mu_{\Pf_2}(a)$ for some $a\in D$, we have
   \begin{equation}\label{eq:inequality-embed}
   \bigvee{\{q\cc\mid \PrG q a\in \Pf_1\}}=\mu_{\Pf_1}(a)\not\leq \mu_{\Pf_2}(a) = \bigwedge{\{p\mm\mid \PrG p a\notin \Pf_2\}}.
   \end{equation}
   Equation \eqref{eq:inequality-embed} implies the existence of $p,q$ satisfying $\PrG q a\in \Pf_1$, $\PrG p a\notin \Pf_2$ and $q\geq p$. It follows by (L1) that  $\PrG p a\in \Pf_1$. We conclude that $\PrG p a \in \Pf_1 \setminus \Pf_2$, whence $\Pf_1\not\sue \Pf_2$.  This shows that $\theta$ is an order embedding, whence injective. 
   
   We prove that $\theta$ is also surjective, thus a bijection. Fix a measure $\mu\in\Mea(D)$. It is not hard to see, using Lemma \ref{l:soundness-L1-L5}, that the filter $\Pf_{\mu}\sue \PFG(D)$ generated by
   \[
       \{ \PrG q a \mid a\in D,\ q\in\Q\cap\ui,\ \mu(a)\geq q\cc\}
   \]
   is prime. Further,  
    $\theta(\Pf_{\mu})(a)=\bigvee{\{q\cc\mid \PrG q a \in \Pf_\mu\}}=\bigvee{\{q\cc\mid \mu(a) \geq q\cc \}} =\mu(a)$ for every $a\in D$. Hence, $\theta(\Pf_{\mu})=\mu$ and $\theta$ is surjective. 
    
    To settle the theorem it remains to show that $\theta$ is continuous.
Note that for a basic clopen of the form $C=\{\mu\in\Mea(D)\mid \mu(a)\geq p\cc\}$ where $a\in D$ and $p\in\Q\cap \ui$, the preimage $\theta\inv(C) = \{\Pf\sue \PFG(D) \mid \mu_{\Pf}(a)\geq p\cc\}$ is equal to
   \begin{align*}
   \{\Pf\in X_{\PFG(D)}\mid \bigvee{\{q\cc\mid \PrG q a\in \Pf\}}\geq p\cc\}
   =\{\Pf\in X_{\PFG(D)}\mid \PrG p a \in \Pf\},
   \end{align*}
    which is a clopen of $X_{\PFG(D)}$.
     Similarly, if $C = \{\mu\in\Mea(D)\mid \mu(a)\leq q\mm\}$ for some $a\in D$ and $q\in\Q\cap (0,1]$, by the claim above $\theta\inv(C)=\{\Pf\in X_{\PFG(D)}\mid \PrG q a\notin \Pf\}$, which is again a clopen of $X_{\PFG(D)}$.\qed
\end{proof}

By Theorem~\ref{t:PFG(D)-dual-to-meas}, for any distributive lattice $D$, the lattice of clopen up-sets of $\Mea(D)$ is isomorphic to the Lindenbaum-Tarski algebra $\PFG(D)$ of our \emph{positive} propositional logic $\PL D$. Moving from the lattice of clopen up-sets to the Boolean algebra of all clopens logically corresponds to adding negation to the logic.
The logic obtained this way can be presented as follows.
Introduce a new propositional variable $\PrL q a$, for each $a\in D$ and $q\in \Q\cap\ui$. For a measure $\mu\in\Mea(D)$, set
\[
\mu\models \PrL q a \ \Leftrightarrow \ \mu(a)< q\cc.
\]
We also add a new rule, stating that $\PrL q a$ is the negation of $\PrG q a$:

\medskip
\quad
\begin{minipage}{0.8\textwidth}
    \begin{enumerate}[label=(L\arabic*)]\setcounter{enumi}{5}
        \item $\PrL q a \mee \PrG q a\models {}$ and ${} \models \PrL q a \vee \PrG q a$
    \end{enumerate}
\end{minipage}
\medskip

\noindent
Clearly, (L6) is satisfied in $\Mea(D)$. Moreover, the Boolean algebra of \emph{all} clopens of $\Mea(D)$ is isomorphic to the quotient of the free distributive lattice on
\[
   \{\PrG p a\mid p\in\Q\cap \ui, \ a\in D\} \cup \{\PrL q b\mid q\in\Q\cap \ui, \ b\in D\}
\]
with respect to the congruence generated by the conditions (L1)--(L6).

\paragraph*{Specialising to $\FO(\sigma)$.}
Let us briefly discuss what happens when we instantiate $D$ with the full first-order logic $\FO(\sigma)$. For a formula $\phi\in \FO(\sigma)$ with free variables $v_1, \dots, v_n$ and a $q\in \Q\cap\ui$, we have two new sentences $\PrG q \phi$ and $\PrL q \phi$. For a finite $\sigma$-structure $A$ identified with its $\Lh$-valued Stone pairing $\SP{\ARG,A}$,
\[ A \models \PrG q \phi \ \text{ (resp.\ } A \models \PrL q \phi) \qtq{iff} \SP{\phi, A} \geq q\cc \ \text{ (resp.\ } \SP{\phi, A} < q\cc). \]
That is, $\PrG q \phi$ is true in $A$ if a random assignment of the variables $v_1, \dots, v_n$ in $A$ satisfies $\phi$ with probability at least $q$. Similarly for $\PrL q \phi$. If we regard $\PrG q$ and $\PrL q$ as probabilistic quantifiers that bind all free variables of a given formula, the Stone pairing $\SP{\ARG,\ARG}\colon \Fin \to \Mea(\FO(\sigma))$ can be seen as the embedding of finite structures into the space of types for the logic $\PL{\FO(\sigma)}$.

\section*{Conclusion}
Types are points of the dual space of a logic (viewed as a Boolean algebra). In classical first-order logic, $0$-types are just the models modulo elementary equivalence. But when there are not `enough' models, as in finite model theory, the spaces of types provide completions of the sets of models.

In \cite{GPR2017}, it was shown that for logic on words and various quantifiers we have that, given a Boolean algebra of formulas with a free variable, the space of types of the Boolean algebra generated by the formulas obtained by quantification is given by a measure space construction. Here we have shown that a suitable enrichment of first-order logic gives rise to a space of measures $\Mea(\FO(\sigma))$ closely related to the space $\Meac(\FO(\sigma))$ used in the theory of structural limits. Indeed Theorem~\ref{t:embedding-triangle} tells us that the ensuing Stone pairings interdetermine each other. Further, the Stone pairing for $\Mea(\FO(\sigma))$  is just the embedding of the finite models in the completion/compac\-tification provided by the space of types of the enriched logic.

These results identify the logical gist of the theory of structural limits, and provide a new and interesting connection between logic on words and the theory of structural limits in finite model theory. But we also expect that it may prove a useful tool in its own right. 
Thus, for structural limits, it is an open problem to characterise the closure of the image of the $[0,1]$-valued Stone pairing  \cite{nevsetvril2016first}. Reasoning instead in the $\Lh$-valued setting, native to logic and where we can use duality, one would expect that this is the subspace $\Mea({\rm Th}(\Fin))$ of  $\Mea(\FO(\sigma))$ given by the quotient $\FO(\sigma)\twoheadrightarrow{\rm Th}(\Fin)$ onto the theory of pseudofinite structures. The purpose of such a characterisation would be to understand the points of the closure as  ``generalised models". 
Another subject that we would like to investigate is that of zero-one laws. The zero-one law for first-order logic states that the sequence of measures for which the $n$th measure, on a sentence $\psi$, yields the proportion of $n$-element structures satisfying $\psi$, converges to a $\{0,1\}$-valued measure. Over $\Lh$ this will no longer be true as $1$ is split into its `limiting' and `achieved' personae.  
Yet, we expect the above sequence to converge also in this setting and by Theorem~\ref{t:embedding-triangle}, it will converge to a $\{0\cc,1\mm,1\cc\}$-valued measure.  
Understanding this more fine-grained measure may yield useful information about the zero-one law.

Further, it would be interesting to investigate whether the limits for schema mappings introduced by Kolaitis \emph{et al.} \cite{Kolaitis2018} may be seen also as a type-theoretic construction. Finally, we would want to explore the connections with other semantically inspired approaches to finite model theory, such as those recently put forward by Abramsky, Dawar \emph{et al.} \cite{Abramsky2017b,AbramskyShah2018}.

\end{document}